\pdfoutput=1
\documentclass[%
 reprint,
 amsmath,amssymb,
 aps,
 nofootinbib,
 superscriptaddress,
 longbibliography,
 showkeys
]{revtex4-2}

\usepackage{amsthm}
\usepackage{graphicx}
\usepackage{xcolor}
\usepackage{physics}
\usepackage{braket}
\usepackage{amsmath}
\usepackage{amssymb}
\usepackage{enumerate}
\usepackage{mathtools}
\usepackage{etoolbox}
\usepackage{algorithmic}
\usepackage{algorithm}
\usepackage{bbm}
\usepackage{here}
\usepackage{float}
\usepackage{comment}
\usepackage{bm}
\usepackage{acronym}
\usepackage[colorlinks=true
,urlcolor=DARKBLUE
,anchorcolor=DARKBLUE
,citecolor=DARKBLUE
,filecolor=DARKBLUE
,linkcolor=DARKBLUE
,menucolor=DARKBLUE
,linktocpage=true
,pdfproducer=medialab
,pdfa=true
]{hyperref}
\usepackage{subfigure}
\usepackage{siunitx}

\newtheorem{theorem}{Theorem}
\newtheorem{lemma}{Lemma}
\newtheorem{problem}{Problem}
\newtheorem{definition}{Definition}
\newtheorem{assumption}{Assumption}
\newtheorem{corollary}{Corollary}

\newcommand{\proc}[1]{\textup{\textsf{#1}}}

\newcommand{\beae}[1]{\begin{equation}\begin{aligned} #1 \end{aligned}\end{equation}}
\newcommand{\bege}[1]{\begin{equation}\begin{gathered} #1 \end{gathered}\end{equation}}
\newcommand{\bae}[1]{\begin{align} #1 \end{align}}

\newcommand{\bme}[1]{\begin{multline} #1 \end{multline}}

\newcommand{\bmbe}[1]{\begin{multlined}[b] #1 \end{multlined}}

\definecolor{MONZA}{HTML}{CF000F}
\definecolor{DARKBLUE}{HTML}{00008b}
\definecolor{DARKMAGENTA}{HTML}{8b008b}

\newcommand{\ee}{\mathrm{e}}
\newcommand{\Mpl}{M_\mathrm{Pl}}
\newcommand{\FP}{\mathrm{FP}}
\newcommand{\FPT}{\mathrm{FPT}}
\newcommand{\umax}{\mathrm{max}}

\newcommand{\sto}{\mathrm{sto}}

\newcommand{\uc}{\mathrm{c}}
\newcommand{\uf}{\mathrm{f}}
\newcommand{\calL}{\mathcal{L}}
\newcommand{\calM}{\mathcal{M}}
\newcommand{\calN}{\mathcal{N}}
\newcommand{\calP}{\mathcal{P}}
\newcommand{\bbR}{\mathbb{R}}
\newcommand{\bfx}{\mathbf{x}}


\begin{document}

\title{Improved quantum algorithm for calculating eigenvalues of differential operators and its application to estimating the decay rate of the perturbation distribution tail in stochastic inflation}
\author{Koichi Miyamoto}
\email{miyamoto.kouichi.qiqb@osaka-u.ac.jp}
\affiliation{Center for Quantum Information and Quantum Biology, The University of Osaka, Toyonaka 560-0043, Japan}

\author{Yuichiro Tada}
\email{tada.yuichiro.y8@f.mail.nagoya-u.ac.jp}
\affiliation{Institute for Advanced Research, Nagoya University,
Furo-cho, Chikusa-ku, 
Nagoya 464-8601, Japan}
\affiliation{Department of Physics, Nagoya University,
Furo-cho, Chikusa-ku,
Nagoya 464-8602, Japan}

\date{\today}

\begin{abstract}

Quantum algorithms for scientific computing and their applications have been studied actively.
In this paper, we propose a quantum algorithm for estimating the first eigenvalue of a differential operator $\mathcal{L}$ on $\mathbb{R}^d$ and its application to cosmic inflation theory.
A common approach for this eigenvalue problem involves applying the finite-difference discretization to $\mathcal{L}$ and computing the eigenvalues of the resulting matrix, but this method suffers from the curse of dimensionality, namely the exponential complexity with respect to $d$.
Our first contribution is the development of a new quantum algorithm for this task, leveraging recent quantum singular value transformation-based methods.
Given a trial function that overlaps well with the eigenfunction, our method runs with query complexity scaling as $\widetilde{O}(d^3/\epsilon^2)$ with $d$ and estimation accuracy $\epsilon$, which is polynomial in $d$ and shows an improvement over existing quantum algorithms.
Then, we consider the application of our method to a problem in a theoretical framework for cosmic inflation known as stochastic inflation, specifically calculating the eigenvalue of the adjoint Fokker--Planck operator, which is related to the decay rate of the tail of the probability distribution for the primordial density perturbation.
We numerically see that in some cases, simple trial functions overlap well with the first eigenfunction, indicating our method is promising for this problem.

\end{abstract}

\maketitle

\acrodef{PDF}{probability density function}
\acrodef{FP}{Fokker--Planck}
\acrodef{PDE}{partial differential equation}
\acrodef{QSVT}{quantum singular value transformation}

\section{Introduction}

Today, we are witnessing the rapid development of quantum computing.
Although it may take decades for us to obtain large-scale fault-tolerant quantum computers, to prepare for such a future, the search for applications of quantum algorithms to practical numerical problems across various fields has increasingly gained momentum in recent years.
In this paper, we consider a quantum algorithm for calculating eigenvalues of differential operators and its application to a problem in cosmology.

With the Sturm--Liouville problem as a representative example, the calculation of eigenvalues of linear partial differential operators, especially the first (that is, smallest) eigenvalue, is one of the major topics in the field of numerical analysis for \acp{PDE} (see Ref.~\cite{larsson2003partial} as a textbook).
As is the case for many problems in numerical analysis, this problem suffers from the so-called \emph{curse of dimensionality}.
To calculate the eigenvalues of a linear partial differential operator $\mathcal{L}$ acting on functions on $\mathbb{R}^d$, a straightforward approach is the finite difference approximation: setting grid points in $\mathbb{R}^d$, we approximate partial derivatives with difference quotients.
This converts $\mathcal{L}$ to a matrix $L$ with finite size, and then we can apply some numerical method for matrix eigenvalue calculation to $L$ (see, e.g., Ref.~\cite{Demmel1997}).
However, the size of $L$ becomes exponentially large with respect to the dimension $d$, that is, $n_\mathrm{gr}^d \times n_\mathrm{gr}^d$, where $n_\mathrm{gr}$ is the number of the grids in one dimension.
Since classical algorithms to calculate eigenvalues of a $n \times n$ matrix generally have the complexity scaling as $O(\mathrm{poly}(n))$, the above approach does not work for large $d$ in classical computing.  


Quantum computers may provide a solution to this issue.
Represented by the monumental Harrow--Hassidim--Lloyd algorithm for solving linear equation systems~\cite{HHL}, quantum algorithms for various linear algebra problems have been devised.
Algorithms for eigenvalue calculation are among them~\cite{Abrams1999,kerenidis2017,chakraborty2019,Gilyen2019,Ge2019,Lin2020nearoptimalground,Martyn2021,low2024quantum}.
They achieve an exponential speed-up compared to the classical ones: their complexities scale as $O(\mathrm{polylog}(n))$.
Therefore, we are motivated to apply these quantum algorithms to the aforementioned finite difference approach for differential operator eigenvalue problems.
In fact, such quantum methods have already been considered in the previous papers~\cite{Szkopek2005,papageorgiou2005classical,BESSEN2006660}.
However, to the best of our knowledge, quantum algorithms for differential operator eigenvalue problems have been studied only in the above papers in 2000s, and among them, only Ref.~\cite{Szkopek2005} considered the multi-dimensional setting and evaluated the complexity of their algorithm in it.
Thus, recent developments in quantum algorithms have not been reflected in this problem yet.

In this paper, we first construct a quantum algorithm for calculating differential operator eigenvalues in a modern way.
Concretely, we utilize the quantum algorithm for calculating matrix eigenvalues devised in Ref.~\cite{Lin2020nearoptimalground}.
This algorithm is based on the recent techniques of block-encoding and \ac{QSVT}~\cite{Low2017,Low2019hamiltonian,chakraborty2019,Gilyen2019,Martyn2021}, which have been now foundations of various quantum algorithms.
We then combine this algorithm with the finite difference approach for differential operator eigenvalues.
Given a test function that overlaps the first eigenfunction well, a quantum circuit to generate the quantum state encoding that function and circuits to calculate the coefficients in $\mathcal{L}$, our algorithm outputs an estimate of the first eigenvalue, where the above circuits are queried a number of times scaling polynomially with respect to $d$ and as $\widetilde{O}(1/\epsilon^2)$\footnote{$\widetilde{O}(\cdot)$ hides the logarithmic factors in the Landau's big-O notation.} with respect to the estimation accuracy $\epsilon$.
Our quantum algorithm achieves an exponential speed-up with respect to $d$ compared to classical algorithms and improvement with respect to $\epsilon$ compared to the method in Ref.~\cite{Szkopek2005}, whose query complexity is $\widetilde{O}(1/\epsilon^3)$.

As the second contribution, we consider a problem in cosmology, the estimation of the decay rate of the tail of the probability distribution of the perturbation in stochastic inflation.
Cosmic inflation~\cite{Starobinsky:1980te,Sato:1981qmu,Guth:1980zm,Linde:1981mu,Albrecht:1982wi,Linde:1983gd}, the accelerated expansion of the early universe, was introduced as a solution to some problems in the Big Bang cosmology and has been a standard paradigm today.
It is considered that the quantum fluctuation of inflatons, the scalar fields that induced inflation, leads to the primordial density perturbation, which is the seed of today's rich structures of the universe such as galaxies.
Stochastic inflation~\cite{Starobinsky:1982ee,Starobinsky:1986fx,Nambu:1987ef,Nambu:1988je,Kandrup:1988sc,Nakao:1988yi,Nambu:1989uf,Mollerach:1990zf,Linde:1993xx,Starobinsky:1994bd} is the formalism to analyze the inflationary perturbation in a probabilistic way.
In stochastic inflation, the fluctuations of the inflatons are associated with a \ac{PDE} called the adjoint \ac{FP} \ac{PDE}, and the eigenvalues of the differential operator in it determine the decay rate of the tail of the perturbation distribution.
In particular, as an interesting scenario, if there are small eigenvalues, the tail becomes fat, and primordial black holes, a candidate for dark matter, are produced abundantly, which motivates us to estimate the first eigenvalue.
Our quantum algorithm may be thus useful for this task, especially in multi-field inflation, where the number of inflatons is large and the differential operator is high-dimensional.

Since we do not have fault-tolerant quantum computers yet, we cannot run our algorithm for the above problem now.
Instead, to see that our algorithm is promising, we numerically check that the overlap condition holds.
That is, for some cases with a small enough number of inflatons to be classically dealt with in the finite difference approach, we observe that the first eigenfunction has a considerable overlap with a simple test function of Gaussian shape.
This result implies that we can prepare the inputs of our algorithm for the stochastic inflation problem and thus run it on future fault-tolerant quantum computers.

This paper is organized as follows.
Sec.~\ref{sec:prel} is a preliminary one, where we present the basics of the finite difference method for calculating eigenvalues of differential operators and some quantum algorithms used as building blocks in this paper.
In Sec.~\ref{sec:ourAlgo}, we present our algorithm for estimating differential operator eigenvalues and evaluate its query complexity.
Sec.~\ref{sec: inflation} is devoted to the numerical demonstration concerning the application of our algorithm to the eigenvalue estimation in stochastic inflation.
After we present the basics of stochastic inflation, we see the considerable overlap between the eigenfunctions and the test functions in some cases.
This paper ends with the summary in Sec.~\ref{sec:sum}.

\section{Preliminary \label{sec:prel}}

\subsection{Notation}

$\mathbb{R}_+$ denotes the set of all non-negative real numbers: $\mathbb{R}_+
\coloneqq\{x\in\mathbb{R} \ | \ x>0\}$.

For $n\in\mathbb{N}$, we define $[n]\coloneqq\{1,
\cdots,n\}$ and $[n]_0\coloneqq\{0,1,
\cdots,n-1\}$.

We label entries in a vector and rows and columns in a matrix with integers starting from 0.
That is, we write $\mathbf{v}\in\mathbb{C}^n$ as $\mathbf{v}=(v_0,
\cdots,v_{n-1})^T$ and $A\in\mathbb{C}^{m \times n}$ as $A=\begin{pmatrix} a_{0,0} & \cdots & a_{0,n-1} \\ \vdots & \ddots & \vdots \\ a_{m-1,0} & \cdots & a_{m-1,n-1} \end{pmatrix}$ entrywise.

For $n\in\mathbb{N}$, $I_n$ denotes the $n \times n$ identity matrix.
We may omit the subscript $n$ if there is no ambiguity.

For $\mathbf{v}\in\mathbb{C}^n$, $\|\mathbf{v}\|$ denotes its Euclidean norm.
For an (unnormalized) quantum state $\ket{\psi}$ on a multi-qubit system, $\|\ket{\psi}\|$ denotes the Euclidean norm of its state vector.
For $A\in\mathbb{C}^{m \times n}$, $\|A\|$ denotes its spectral norm.
$\|\mathbf{v}\|_\mathrm{max}$ (resp. $\|A\|_\mathrm{max}$) denotes the max norm of $\mathbf{v}$ (resp. $A$), which means the maximum of the absolute values of its entries.

For $\epsilon>0$, we say that $x^\prime\in\mathbb{R}$ is an $\epsilon$-approximation of $x\in\mathbb{R}$, if $|x^\prime-x|\le\epsilon$ holds.
We also say that $x^\prime$ is $\epsilon$-close to $x$.

If $A\in\mathbb{C}^{m \times n}$ has at most $s$ nonzero entries in each row and column, we say that $A$ is $s$-sparse and the sparsity of $A$ is $s$.

For a function $f:\Omega\rightarrow\mathbb{R}$ on $\Omega\in\mathbb{R}^d$ and $\boldsymbol{\alpha}=(\alpha_1,\cdots,\alpha_d)\in\mathbb{N}^d$, we define
\begin{equation}
    |\boldsymbol{\alpha}|\coloneqq\alpha_1+\cdots\alpha_d
\end{equation}
and
\begin{equation}
    D^{\boldsymbol{\alpha}}f \coloneqq \frac{\partial^{|\boldsymbol{\alpha}|}f}{\partial x_1^{\alpha_1}\cdots\partial x_d^{\alpha_d}}.
\end{equation}

We denote by $1_C$ the indicator function, which takes 1 if the condition $C$ is satisfied and 0 otherwise.

\subsection{Approximating eigenvalues of a differential operator by the finite difference method \label{sec:FDAppEigen}}

We now formulate the eigenvalue problem for a differential operator, focusing on the Sturm--Liouville type, to which the eigenvalue problem in stochastic inflation also boils down. 

\begin{problem}
Consider
\begin{equation}
    \mathcal{L}=-\sum_{i=1}^d \frac{\partial}{\partial x_i}\left(a_i\frac{\partial}{\partial x_i}\right)+a_0
    \label{eq:SturmLiou}
\end{equation}
a linear second-order self-adjoint elliptic differential operator on $\mathcal{D}
\coloneqq(L,U)\times\cdots\times(L,U) \subset \mathbb{R}^d$, where $a_0,a_1,
\cdots,a_d:\overline{\mathcal{D}} \rightarrow \mathbb{R}_+$ are four-times continuously differentiable on $\overline{\mathcal{D}}$. 
We consider the eigenvalue problem for $\mathcal{L}$ with the Dirichlet boundary condition: we aim to find real numbers $\lambda_1 \le \lambda_2 \le \cdots$, each of which satisfies
\begin{equation}
\begin{cases}
    \mathcal{L}f_k(\mathbf{x})=\lambda_k f_k(\mathbf{x}) & ; \ \mathrm{for} \ \mathbf{x}\in\mathcal{D} \\
    f_k(\mathbf{x})=0 & ; \ \mathrm{for} \ \mathbf{x}\in\partial \mathcal{D}
\end{cases}
\end{equation}
for some function $f_k:\overline{\mathcal{D}}\rightarrow\mathbb{R}$, especially the smallest one $\lambda_1$.
\label{prob:eigen}
\end{problem}

We call each $\lambda_k$ an eigenvalue of $\mathcal{L}$ and $f_k$ the eigenfunction of $\mathcal{L}$ corresponding to $\lambda_k$.
We denote by $\mathcal{F}_\mathcal{L}$ the space of functions spanned by eigenfunctions of $\mathcal{L}$.

For this problem, we take a finite difference-based approach.
We start by setting grid points in $\overline{\mathcal{D}}$.
We take an integer $n_\mathrm{gr}\in\mathbb{N}$, which means the number of grid points in each of the $d$ dimensions, and denote by $N_\mathrm{gr}\coloneqq n_\mathrm{gr}^d$ the total number of grid points in $\mathcal{D}$.
We set the grid point interval $h\coloneqq(U-L)/(n_\mathrm{gr}+1)$.
We take the grid point set $\mathcal{D}_{n_\mathrm{gr}}$, which consists of the points $\mathbf{x}^\mathrm{gr}_\mathbf{j} \in \mathcal{D}$ labeled by $\mathbf{j}=(j_1,
\cdots,j_d)^T\in[n_\mathrm{gr}]_0^d$ and written as
\beae{
    \mathbf{x}^\mathrm{gr}_\mathbf{j}&=(x^\mathrm{gr}_{1,j_1},
    \cdots,x^\mathrm{gr}_{d,j_d})^T, \\
    x^\mathrm{gr}_{i,j_i} &= (j_i+1)h + L \ \mathrm{for \ each} \ i\in[d].
}
We label the point $\mathbf{x}^\mathrm{gr}_\mathbf{j}$ also by an integer $J=J(\mathbf{j})$, where
\begin{equation}
    J(\mathbf{j})\coloneqq\sum_{i=1}^d n_\mathrm{gr}^{i-1}j_i,
    \label{eq:IdConv}
\end{equation}
and denote it also by $\mathbf{x}^\mathrm{gr}_J$.
Hereafter, we sometimes label entries in a vector $v\in\mathbb{C}^{N_\mathrm{gr}}$ and rows and columns in a matrix $A\in\mathbb{C}^{N_\mathrm{gr} \times N_\mathrm{gr}}$ with $\mathbf{j}\in[n_\mathrm{gr}]_0^d$, where the one labeled by $\mathbf{j}$ corresponds to the $J(\mathbf{j})$-th one: e.g., $v_\mathbf{j}$ is the $J(\mathbf{j})$-th entry in the vector $(v_0,
\cdots,v_{N_\mathrm{gr}-1})^T$.

Using these grid points, we construct a finite difference approximation $L_{n_\mathrm{gr}}$ of $\mathcal{L}$.
That is, we want a $N_\mathrm{gr} \times N_\mathrm{gr}$ real symmetric matrix $L_{n_\mathrm{gr}}$ that satisfies
\begin{equation}
    \left|(L_{n_\mathrm{gr}} \mathbf{v}_{f,n_\mathrm{gr}})_\mathbf{j} - \mathcal{L}f(\mathbf{x}^\mathrm{gr}_\mathbf{j})\right| \xrightarrow{n_\mathrm{gr}\rightarrow\infty}0.
    \label{eq:FDAppL}
\end{equation}
for any $f\in \mathcal{F}_\mathcal{L}$ and any $\mathbf{j}\in[n_\mathrm{gr}]_0^d$.
Here, for $f:\overline{\mathcal{D}}\rightarrow\mathbb{R}$, we define $\mathbf{v}_{f,n_\mathrm{gr}}\in\mathbb{R}^{N_\mathrm{gr}}$ as
\begin{equation}
    \mathbf{v}_{f,n_\mathrm{gr}}\coloneqq(f(\mathbf{x}^\mathrm{gr}_0),
    \cdots,f(\mathbf{x}^\mathrm{gr}_{N_\mathrm{gr}-1}))^T.
\end{equation}
Specifically, we take $L_{n_\mathrm{gr}}$ as follows: for $\mathbf{j}_1,\mathbf{j}_2\in[n_\mathrm{gr}]_0^d$, its $(\mathbf{j}_1,\mathbf{j}_2)$-th entry is
\begin{align}
    & (L_{n_\mathrm{gr}})_{\mathbf{j}_1,\mathbf{j}_2} = \nonumber \\
    & 
    \begin{dcases}
    \sum_{i=1}^d \frac{a_i\left(\mathbf{x}^\mathrm{gr}_{\mathbf{j}_1}+\frac{h}{2}\mathbf{e}_i\right) + a_i\left(\mathbf{x}^\mathrm{gr}_{\mathbf{j}_1}-\frac{h}{2}\mathbf{e}_i\right)}{h^2} + a_0(\mathbf{x}^\mathrm{gr}_{\mathbf{j}_1}) \\
    \qquad \qquad \qquad \qquad \qquad \qquad \qquad \qquad \qquad \ \ \ \mathrm{if} \ \mathbf{j}_1=\mathbf{j}_2, \\
    -\frac{a_i\left(\mathbf{x}^\mathrm{gr}_{\mathbf{j}_1}\pm\frac{h}{2}\mathbf{e}_i\right)}{h^2} \qquad \mathrm{if} \ \mathbf{j}_2=\mathbf{j}_1 \pm \mathbf{e}_i \ \mathrm{for} \ \mathrm{some} \ i\in[d],\\
    0 \qquad \qquad \qquad \qquad \qquad \qquad \qquad \qquad \qquad \mathrm{otherwise,}
    \end{dcases}
    \label{eq:FDMat}
\end{align}
where $\mathbf{e}_i\in\mathbb{R}^d$ is a vector with all entries equal to 0 except for the $i$-th entry equal to 1.
In other words, for $\mathbf{j}\in[n_\mathrm{gr}]_0^d$,
\begin{align}
& (L_{n_\mathrm{gr}} \mathbf{v}_{f,n_\mathrm{gr}})_{\mathbf{j}} = \nonumber \\
&\ \sum_{i=1}^d \frac{1}{h^2}\left[-a_i\left(\mathbf{x}^\mathrm{gr}_{\mathbf{j}}+\frac{h}{2}\mathbf{e}_i\right)f(\mathbf{x}^\mathrm{gr}_{\mathbf{j}}+h\mathbf{e}_i)\right. \nonumber \\
    \ & \qquad \qquad +\left(a_i\left(\mathbf{x}^\mathrm{gr}_{\mathbf{j}}+\frac{h}{2}\mathbf{e}_i\right)+a_i\left(\mathbf{x}^\mathrm{gr}_{\mathbf{j}}-\frac{h}{2}\mathbf{e}_i\right)\right)f(\mathbf{x}^\mathrm{gr}_{\mathbf{j}}) \nonumber \\
    \ & \qquad \qquad   \left.-a_i\left(\mathbf{x}^\mathrm{gr}_{\mathbf{j}}-\frac{h}{2}\mathbf{e}_i\right)f(\mathbf{x}^\mathrm{gr}_{\mathbf{j}}-h\mathbf{e}_i)\right] \nonumber \\
    \ & + a_0(\mathbf{x})f(\mathbf{x}^\mathrm{gr}_{\mathbf{j}}).
    \label{eq:FDMatExp}
\end{align}
This $L_{n_\mathrm{gr}}$ is based on the following formula (see Sec.~6.2 in Ref.~\cite{larsson2003partial})
\begin{align}
    &\frac{\partial}{\partial x_i}\left(a_i\frac{\partial f}{\partial x_i}\right)(\mathbf{x}) = \nonumber \\
    &\ \frac{1}{h^2}\left[a_i\left(\mathbf{x}+\frac{h}{2}\mathbf{e}_i\right)f(\mathbf{x}+h\mathbf{e}_i)\right. \nonumber \\
    \ & \qquad -\left(a_i\left(\mathbf{x}+\frac{h}{2}\mathbf{e}_i\right)+a_i\left(\mathbf{x}-\frac{h}{2}\mathbf{e}_i\right)\right)f(\mathbf{x}) \nonumber \\
    \ & \qquad  + \left.a_i\left(\mathbf{x}-\frac{h}{2}\mathbf{e}_i\right)f(\mathbf{x}-h\mathbf{e}_i)\right] + O(h^2),
\end{align}
which holds for any sufficiently smooth function $f$.

Then, we expect that the eigenvalues of $L_{n_\mathrm{gr}}$ approximate those of $\mathcal{L}$ for sufficiently large $n_\mathrm{gr}$, and, in fact, this holds.
We have the following theorem by applying Theorem~5.1 in Ref.~\cite{Kuttler1970} to the current case. 

\begin{theorem}

For each $k\in\mathbb{N}$, there exist constants $C^k_\mathcal{L},D^k_\mathcal{L}\in\mathbb{R}$ determined only by $k$ and $\mathcal{L}$ such that, for any $n_\mathrm{gr}\in\mathbb{N}$, the following hold:
\begin{itemize}
\item
\begin{equation}
|\lambda^k_{n_\mathrm{gr}}-\lambda_k|\le \frac{C^k_\mathcal{L}}{n_\mathrm{gr}^2},
\end{equation}
where $\lambda^1_{n_\mathrm{gr}} \le \lambda^2_{n_\mathrm{gr}} \le \cdots$ are the eigenvalues of $L_{n_\mathrm{gr}}$.

\item 
For any eigenvector $\mathbf{v}^k_{n_\mathrm{gr}}$ of $L_{n_\mathrm{gr}}$ that corresponds to $\lambda^k_{n_\mathrm{gr}}$ and is normalized as $\|\mathbf{v}^k_{n_\mathrm{gr}}\|_{n_\mathrm{gr}}=1$, there exists an eigenfunction $f_k$ of $\mathcal{L}$ corresponding to $\lambda_k$ that satisfies
\begin{equation}
\|\mathbf{v}^k_{n_\mathrm{gr}}-\mathbf{v}_{f_k,n_\mathrm{gr}}\|_\mathrm{max}\le \frac{D^k_\mathcal{L}}{n_\mathrm{gr}^2}.
\end{equation}
Here, for $\mathbf{v}=(v_0,\cdots,v_{N_\mathrm{gr}-1})\in\mathbb{C}^{N_\mathrm{gr}}$, $\|\mathbf{v}\|_{n_\mathrm{gr}}\coloneqq h^d\sum_{J=0}^{N_\mathrm{gr}-1} |v_J|^2$.
\end{itemize}

\label{th:EigConv}
\end{theorem}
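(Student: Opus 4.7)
The plan is to reduce the statement directly to Theorem 5.1 of Ref.~\cite{Kuttler1970}, which establishes $O(h^2)$ convergence of finite-difference eigenvalue approximations for self-adjoint elliptic problems with Dirichlet boundary conditions. The task then splits into (a) verifying that the setup of Problem~\ref{prob:eigen} satisfies Kuttler's hypotheses, (b) confirming that the matrix $L_{n_\mathrm{gr}}$ in Eq.~\eqref{eq:FDMat} coincides with the discretization scheme analyzed there, and (c) translating the resulting bounds, including the norm convention, into the notation used in the statement.

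First, I would check the hypotheses. The operator $\mathcal{L}$ is self-adjoint and uniformly elliptic on $\overline{\mathcal{D}}$ (since each $a_i$ maps into $\mathbb{R}_+$ and is continuous on the compact set $\overline{\mathcal{D}}$, so it is bounded away from $0$), and the $C^4$ regularity of the coefficients together with the Dirichlet boundary condition on the rectangular domain $\mathcal{D}$ guarantees, by standard elliptic regularity, a discrete spectrum $\lambda_1\le\lambda_2\le\cdots$ with smooth eigenfunctions $f_k\in C^4(\overline{\mathcal{D}})\cap C_0(\overline{\mathcal{D}})$. Next, consistency follows from Eq.~(12): plugging $\mathbf{v}_{f,n_\mathrm{gr}}$ into Eq.~\eqref{eq:FDMatExp} and applying the second-order Taylor expansion used in Eq.~(12) to each direction $i\in[d]$ gives
\begin{equation}
  (L_{n_\mathrm{gr}}\mathbf{v}_{f,n_\mathrm{gr}})_\mathbf{j} - \mathcal{L}f(\mathbf{x}^\mathrm{gr}_\mathbf{j}) = O(h^2)
\end{equation}
uniformly in $\mathbf{j}\in[n_\mathrm{gr}]_0^d$, where the implicit constant depends on bounds for $D^{\boldsymbol{\alpha}}f$ with $|\boldsymbol{\alpha}|\le 4$ and for $D^{\boldsymbol{\alpha}}a_i$ with $|\boldsymbol{\alpha}|\le 2$. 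This establishes Eq.~\eqref{eq:FDAppL} and is the local truncation estimate that Kuttler's theorem turns into an eigenvalue/eigenvector error bound.

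Having verified the input, I would invoke Kuttler's result to obtain $|\lambda^k_{n_\mathrm{gr}}-\lambda_k|\le C h^2$ and $\|\mathbf{v}^k_{n_\mathrm{gr}}-\mathbf{v}_{f_k,n_\mathrm{gr}}\|_\mathrm{max}\le D h^2$, with constants $C,D$ depending only on $k$ and on the fixed operator $\mathcal{L}$ (through bounds on $f_k$ and the coefficients). Since $h=(U-L)/(n_\mathrm{gr}+1)\le (U-L)/n_\mathrm{gr}$, the factor $h^2$ can be replaced by $1/n_\mathrm{gr}^2$ after absorbing $(U-L)^2$ into $C^k_\mathcal{L}$ and $D^k_\mathcal{L}$. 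The normalization $\|\mathbf{v}^k_{n_\mathrm{gr}}\|_{n_\mathrm{gr}}=1$ is the discrete analogue of the $L^2(\mathcal{D})$ normalization, and it is precisely this matching of normalizations that lets the discrete eigenvector be compared to a continuous eigenfunction in the same pointwise scale.

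The main obstacle is the \emph{max-norm} statement for eigenvectors: an off-the-shelf spectral perturbation argument applied to $L_{n_\mathrm{gr}}$ only yields an $\ell^2$-type bound, whereas the claim is in $\|\cdot\|_\mathrm{max}$. This stronger bound rests on the discrete maximum principle for the M-matrix $L_{n_\mathrm{gr}}$ (which is a consequence of the sign pattern in Eq.~\eqref{eq:FDMat} and positivity of the $a_i$) used internally in Kuttler's proof, and we rely on that mechanism rather than reproving it here. A related subtlety is eigenvalue multiplicity: when $\lambda_k$ is degenerate, the $f_k$ appearing in the eigenvector bound must be chosen within the eigenspace of $\lambda_k$, which is why the statement reads ``there exists an eigenfunction $f_k$'' rather than fixing $f_k$ a priori.
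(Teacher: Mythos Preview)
Your proposal is correct and follows essentially the same route as the paper: both reduce the theorem to Theorem~5.1 of Kuttler and verify that the operator, domain, and finite-difference scheme satisfy the required hypotheses (ellipticity, $C^4$ regularity of eigenfunctions, the M-matrix sign structure for the discrete maximum principle, and second-order consistency via Taylor expansion). The paper's proof is simply more explicit in marching through Kuttler's enumerated conditions (i)--(v), locality, and symmetry one by one, whereas you summarize them under general headings.
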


The proof is given in Appendix~\ref{sec:ProofEigConv}.

\subsection{Building-block quantum algorithms}

\subsubsection{Arithmetic circuits}

In this paper, we consider computation on the system with multiple quantum registers. 
We use the fixed-point binary representation for real numbers and, for each $x\in\mathbb{R}$, we denote by $\ket{x}$ the computational basis state on a quantum register where the bit string corresponds to the binary representation of $x$.
We assume that every register has a sufficient number of qubits and thus neglect errors by finite precision representation.

We can perform arithmetic operations on numbers represented on qubits.
For example, we can implement quantum circuits for four basic arithmetic operations: addition $O_{\mathrm{add}}:\ket{a}\ket{b}\ket{0}\mapsto\ket{a}\ket{b}\ket{a+b}$, subtraction $O_{\mathrm{sub}}:\ket{a}\ket{b}\ket{0}\mapsto\ket{a}\ket{b}\ket{a-b}$, multiplication $O_{\mathrm{mul}}:\ket{a}\ket{b}\ket{0}\mapsto\ket{a}\ket{b}\ket{ab}$, and division $O_{\mathrm{div}}:\ket{a}\ket{b}\ket{0}\ket{0}\mapsto\ket{a}\ket{b}\ket{q}\ket{r}$, where $a,b\in\mathbb{Z}$ and $q$ and $r$ are the quotient and remainder of $a/b$, respectively.
For concrete implementations, see Ref.~\cite{MunosCoreas2022} and the references therein.
In the finite precision binary representation, these operations are immediately extended to those for real numbers.
Moreover, using the above circuits, we can also construct a circuit to compute an elementary function $f$ such as an exponential as $O_f:\ket{x}\ket{0}\mapsto\ket{x}\ket{f(x)}$; see \cite{haner2018optimizing} for the detail of the construction.
Hereafter, we collectively call these circuits arithmetic circuits.
The gate cost of such a circuit is polynomial with respect to the number of digits in binary representation and not related to the parameters in the currently considered problem such as $\epsilon$, as long as a sufficient number of digits are used.
Thus, we regard their contributions as constant in the complexity evaluation conducted later.
That is, in such an evaluation, we regard them as a unit and count the number of queries to them and circuits constructed by their combinations (e.g., $O_{a_i}$ in Eq.~\eqref{eq:Oai}).

\subsubsection{Representation of a vector as a quantum state \label{sec:VectorEncode}}

For a vector $\mathbf{v}=(v_0,
\cdots,v_{n-1})\in\mathbb{R}^n$, we consider the two ways to encode it into a quantum state.
The first one is amplitude encoding: we define 
\begin{equation}
    \ket{\mathbf{v}}\coloneqq \frac{1}{\sqrt{\sum_{i=0}^{n-1}v_i^2}}\sum_{i=0}^{n-1} v_i\ket{i}.
\end{equation}
We will later consider quantum states like this for vectors like $\mathbf{v}_{f,n_\mathrm{gr}}$, which consists of the values of functions at grid points.
For a function written by some explicit formula, there are various methods to generate a amplitude-encoding state~\cite{grover2002creating,Sanders2019,zoufal2019quantum,Holmes2020,rattew2022preparing,mcardle2022quantum,MarinSanchez2023,Moosa_2024}.

The second way is binary encoding: we define
\begin{equation}
    \ket{\mathbf{v}}_\mathrm{bin}\coloneqq \ket{v_0}\cdots\ket{v_{n-1}}.
\end{equation}

\subsubsection{Block-encoding}

Block-encoding means embedding a general matrix into the upper-left block of a unitary matrix.

\begin{definition}[Ref.~\cite{Gilyen2019}, Definition 24]
Let $n,a\in\mathbb{N}$, $A\in\mathbb{C}^{2^n\times2^n}$,  $\epsilon\in\mathbb{R}_+$, and $\alpha\ge\|A\|$.
We say that a unitary $U$ on a $(n+a)$-qubit system is an $(\alpha,a,\epsilon)$-block-encoding of $A$, if
\begin{equation}
    \left\|A-\alpha(\bra{0}^{\otimes a}\otimes I_{2^n})U(\ket{0}^{\otimes a}\otimes I_{2^n})\right\|\le\epsilon
\end{equation}
\end{definition}

We can efficiently construct a block-encoding of a matrix $A$ if we have sparse-access to $A$, that is, if $A$ is sparse and we can query quantum circuits that output positions and values of nonzero entries in $A$.

\begin{theorem}[Ref.~\cite{Gilyen2019}, Lemma 48 in the full version, modified]

Let $A=(A_{ij})\in\mathbb{C}^{2^n \times 2^n}$ be a $s$-sparse matrix.
Suppose that we have accesses to the oracles $O^A_\mathrm{row}$ and $O^A_\mathrm{col}$ that act on a two register system as
\begin{equation}
    O^A_\mathrm{row}\ket{i}\ket{k}=\ket{i}\ket{r_{ik}},O^A_\mathrm{col}\ket{k}\ket{i}=\ket{c_{ki}}\ket{i}
    \label{eq:OARowCol}
\end{equation}
for any $i\in[2^n]_0$ and $k\in[s]$, where $r_{ik}$ (resp. $c_{ki}$) is the index of the $k$th nonzero entry in the $i$th row (resp. column) in $A$, or $i+2^n$ if there are less than $k$ nonzero entries.
Besides, suppose that we have accesses to the oracle $O^A_\mathrm{ent}$ that acts on a three register system as
\begin{equation}
    O^A_\mathrm{ent}\ket{i}\ket{j}\ket{0}=\ket{i}\ket{j}\ket{A_{ij}}.
    \label{eq:OAEnt}
\end{equation}
Then, for any $\epsilon\in\mathbb{R}_+$, there exists a $(s\|A\|_\mathrm{max},n+3,\epsilon)$-block-encoding of $A$, in which $O^A_\mathrm{row}$ and $O^A_\mathrm{col}$ are each queried once, $O^A_\mathrm{ent}$ is queried twice, additional $O\left(n+\log^{5/2}\left(\frac{s^2\|A\|_\mathrm{max}}{\epsilon}\right)\right)$ 1- and 2-qubit gates are used, and $O\left(\log^{5/2}\left(\frac{s^2\|A\|_\mathrm{max}}{\epsilon}\right)\right)$ ancilla qubits are used.
\label{th:BlEncSp}
\end{theorem}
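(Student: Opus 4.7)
The plan is to realize $U$ as a product of two oracle-based state-preparation isometries, following the standard sparse-access construction. The key idea is that if I can build two unitaries $U_L, U_R$ acting on an $(n+3)$-qubit ancilla together with a data register of $n$ qubits, whose overlaps satisfy
\begin{equation}
    (\bra{0}^{\otimes(n+3)}\otimes\bra{i})\, U_R^\dagger\, \mathrm{SWAP}_{n,n}\, U_L\, (\ket{0}^{\otimes(n+3)}\otimes\ket{j}) = \frac{A_{ij}}{s\|A\|_\mathrm{max}},
\end{equation}
then $U\coloneqq U_R^\dagger\cdot\mathrm{SWAP}_{n,n}\cdot U_L$ is an exact $(s\|A\|_\mathrm{max},n+3,0)$-block-encoding, and replacing the ideal ingredients by $\epsilon$-approximations produces the stated $\epsilon$-block-encoding.

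To construct $U_L$ on input $\ket{0}^{\otimes(n+3)}\ket{j}$, I first prepare the uniform superposition $\tfrac{1}{\sqrt{s}}\sum_{k=1}^s \ket{k}$ on a $\lceil\log s\rceil$-qubit subregister using Hadamards, and then apply $O^A_\mathrm{col}$ once to reach $\tfrac{1}{\sqrt{s}}\sum_k \ket{c_{kj}}\ket{j}$. For $U_R$ I perform the analogous preparation using $O^A_\mathrm{row}$ once, but in addition I coherently compute the entry $\ket{A_{i,r_{ik}}}$ into a work register via $O^A_\mathrm{ent}$, apply a controlled single-qubit rotation that transfers $A_{i,r_{ik}}/\|A\|_\mathrm{max}$ into the amplitude of a flag qubit, and then uncompute $O^A_\mathrm{ent}$. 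This design uses $O^A_\mathrm{row}$ and $O^A_\mathrm{col}$ exactly once each and $O^A_\mathrm{ent}$ exactly twice. Expanding the bilinear form above then reduces, thanks to sparsity (which forces at most one matching $(c_{kj},r_{i\ell})$ pair contributing to the $(i,j)$ block element), to $A_{ij}/(s\|A\|_\mathrm{max})$ as required.

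The main technical obstacle, and the origin of the $\log^{5/2}$ contribution, is the controlled amplitude-encoding step: I must synthesize a rotation of the form $\ket{A}\ket{0}\mapsto\ket{A}\bigl(\tfrac{A}{\|A\|_\mathrm{max}}\ket{0}+\sqrt{1-(A/\|A\|_\mathrm{max})^2}\ket{1}\bigr)$ to spectral-norm error at most $\delta=\epsilon/(s\|A\|_\mathrm{max})$, so that the resulting block-encoding error is bounded by $\epsilon$ after propagating through the two isometries. I would do this by first computing the angle $\arcsin(A/\|A\|_\mathrm{max})$ into an ancilla using the arithmetic circuits discussed earlier, and then compiling the angle-controlled $R_y$ rotation to precision $\delta$ via a Solovay--Kitaev-type decomposition into Clifford$+T$, whose gate and ancilla complexity is $O(\log^{5/2}(1/\delta))$. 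Substituting $\delta=\epsilon/(s\|A\|_\mathrm{max})$ gives the stated $O(\log^{5/2}(s^2\|A\|_\mathrm{max}/\epsilon))$ overhead, with the additional $O(n)$ coming from the Hadamard preparation and the $\mathrm{SWAP}_{n,n}$. Checking that every auxiliary register is cleanly uncomputed and that the approximation errors compose linearly in the spectral norm is then a routine verification.
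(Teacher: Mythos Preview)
The paper does not supply its own proof of this theorem; it is quoted (with minor modifications) from Lemma~48 of Gily\'en et al.\ and used as a black box. Your sketch is precisely the standard sparse-access construction from that reference---two state-preparation isometries built from $O^A_\mathrm{row}$, $O^A_\mathrm{col}$, and a compute/uncompute pair of $O^A_\mathrm{ent}$ sandwiching a controlled rotation, combined via a $\mathrm{SWAP}$---so there is nothing to contrast. A couple of small points you should tighten if you flesh this out: the uniform superposition over $[s]$ via Hadamards is exact only when $s$ is a power of two (otherwise pad the sparsity or use the standard $O(\log s)$-gate preparation for arbitrary $s$), and you should verify that your error budget reproduces the $s^2$ rather than $s$ inside the logarithm, which in the original arises from the precision needed in the fixed-point representation of the entry before the rotation is synthesized.
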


\subsubsection{Estimation of the smallest eigenvalue of a hermitian}

Given a block-encoding of a matrix $A$, we can use a technique called \ac{QSVT}~\cite{Gilyen2019,Martyn2021} to construct a block-encoding of another matrix $\tilde{A}$ obtained via transforming the singular values of $A$ by a polynomial $f$ satisfying some conditions.
For a hermitian $H$, this corresponds to transforming its eigenvalues, and various operations concerning the eigenvalues are possible via \ac{QSVT}.
Ref.~\cite{Lin2020nearoptimalground} utilized this scheme to construct a quantum algorithm to estimate the smallest eigenvalue of a hermitian.

\begin{theorem}[Ref.~\cite{Lin2020nearoptimalground}, Theorem 8]

Let $H$ be a $2^n \times 2^n$ hermitian with the smallest eigenvalue $\lambda_1$ and the corresponding eigenvector $\ket{\psi_1}$.
Suppose that we have access to $(\alpha,a,0)$-block-encoding $U_H$ of $H$.
Also suppose that we have access to the unitary $U_{\ket{\phi_1}}$ on a $n$-qubit system to generate a state $\ket{\phi_1}$, that is, $U_{\ket{\phi_1}}\ket{0}=\ket{\phi_1}$, where $\left|\left\langle\phi_1\middle| \psi_1\right\rangle\right|\ge\gamma$ holds with some $\gamma>0$.
Then, for any $\delta\in(0,1)$ and $\epsilon>0$, there exists a quantum algorithm that outputs an $\epsilon$-approximation of $\lambda_1$ with probability at least $1-\delta$, using $U_H$
\begin{equation}
    O\left(\frac{\alpha}{\gamma \epsilon}\log\left(\frac{\alpha}{ \epsilon}\right)\log\left(\frac{1}{ \gamma}\right)\log\left(\frac{\log(\alpha/\epsilon)}{ \delta}\right)\right)
    \label{eq:GEEUHQuery}
\end{equation}
times, $U_{\ket{\phi_0}}$
\begin{equation}
    O\left(\frac{1}{\gamma}\log\left(\frac{\alpha}{ \epsilon}\right)\log\left(\frac{\log(\alpha/\epsilon)}{ \delta}\right)\right)
    \label{eq:GEEUiniQuery}
\end{equation}
times,
\begin{equation}
    O\left(\frac{a\alpha}{\gamma \epsilon}\log\left(\frac{\alpha}{ \epsilon}\right)\log\left(\frac{1}{ \gamma}\right)\log\left(\frac{\log(\alpha/\epsilon)}{ \delta}\right)\right)
    \label{eq:GEEGateNum}
\end{equation}
additional 1- and 2-qubit gates, and
\begin{equation}
    O\left(n+a+\log\left(\frac{1}{ \gamma}\right)\right)
    \label{eq:GEEQubitNum}
\end{equation}
qubits.

    \label{th:GEE}
\end{theorem}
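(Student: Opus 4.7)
The plan is to combine a QSVT-based spectral filter with a binary search over the real line, which is the standard modern route (Lin--Tong) to the ground-energy estimation problem. At a candidate threshold $\mu\in[-\alpha,\alpha]$ one wants a subroutine that decides whether $\lambda_1\le \mu-\epsilon/2$ or $\lambda_1\ge \mu+\epsilon/2$, and iterating such a subroutine along a bisection scheme pins $\lambda_1$ down to accuracy $\epsilon$ in $O(\log(\alpha/\epsilon))$ rounds. The trial state $\ket{\phi_1}$ enters as the ``probe'': if the threshold lies above $\lambda_1$, then the low-energy component of $\ket{\phi_1}$ has amplitude at least $\gamma$ (by the assumption $|\langle\phi_1|\psi_1\rangle|\ge\gamma$), whereas if the threshold lies below $\lambda_1$ the low-energy component vanishes.

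First I would construct the polynomial filter. Let $F_\mu(x)$ be a degree-$D$ real polynomial with $|F_\mu|\le 1$ on $[-1,1]$, satisfying $F_\mu(x)\ge 1-\eta$ for $x\le (\mu-\epsilon/2)/\alpha$ and $|F_\mu(x)|\le \eta$ for $x\ge (\mu+\epsilon/2)/\alpha$. Standard polynomial approximations of the shifted sign/step function (as built in Refs.~\cite{Gilyen2019,Lin2020nearoptimalground}) achieve this with
\begin{equation}
    D=O\!\left(\frac{\alpha}{\epsilon}\log\frac{1}{\eta}\right).
\end{equation}
Given the $(\alpha,a,0)$-block-encoding $U_H$, QSVT produces a block-encoding of $F_\mu(H/\alpha)$ using $D$ queries to $U_H$ and $U_H^\dagger$, $O(aD)$ additional 1- and 2-qubit gates, and one extra ancilla qubit for the phase factors.

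Next I would turn the filter into a decision procedure. Applying the block-encoding of $F_\mu(H/\alpha)$ to $U_{\ket{\phi_1}}\ket{0}$ yields a state whose success amplitude on the ancilla ``all-zero'' flag equals $\|F_\mu(H/\alpha)\ket{\phi_1}\|$; expanding $\ket{\phi_1}$ in the eigenbasis of $H$ shows this amplitude is $\ge \gamma(1-\eta)$ in the YES case and $\le \eta$ in the NO case. Taking $\eta=\Theta(\gamma)$, these two regimes are separated by a constant factor in the amplitude, and fixed-point amplitude estimation (or the amplitude-comparison routine of Ref.~\cite{Lin2020nearoptimalground}) decides between them with constant success probability using $O(1/\gamma)$ rounds of the block-encoding and $O(1/\gamma)$ queries to $U_{\ket{\phi_1}}$. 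Repeating $O(\log(\log(\alpha/\epsilon)/\delta))$ times and taking the majority (median trick) boosts the per-test failure probability to $\delta/\log(\alpha/\epsilon)$, so a union bound over the $O(\log(\alpha/\epsilon))$ bisection rounds yields total failure probability at most $\delta$. Multiplying the per-test cost $D\cdot(1/\gamma)$ by the number of rounds and the boosting factor reproduces the stated $U_H$ query count~\eqref{eq:GEEUHQuery}; analogous bookkeeping gives the $U_{\ket{\phi_1}}$ count~\eqref{eq:GEEUiniQuery}, the total gate count~\eqref{eq:GEEGateNum}, and the qubit count~\eqref{eq:GEEQubitNum} (the $\log(1/\gamma)$ additional qubits come from the amplitude-estimation register).

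The main obstacle is the gap-free analysis: no spectral gap between $\lambda_1$ and $\lambda_2$ is assumed, so the filter cannot cleanly isolate a single eigenspace. The correct way around this is to use a one-sided filter (an approximation of $x\mapsto \mathbf{1}_{x\le \mu}$) so that the YES/NO dichotomy concerns only whether any eigenvalue lies below $\mu$, not whether exactly one does. Choosing the approximation error $\eta$ small compared with $\gamma$ so that the spillage of eigenvalues lying in the transition band $[\mu-\epsilon/2,\mu+\epsilon/2]$ cannot mimic a full $\gamma$-sized signal, and keeping the polynomial bounded on the whole spectrum $[-1,1]$ rather than only near $\pm 1$, is the delicate part of the construction; once this is done, the rest is arithmetic on the filter degree and the number of bisection steps.
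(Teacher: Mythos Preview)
The paper does not prove this theorem at all: it is quoted verbatim as Theorem~8 of Ref.~\cite{Lin2020nearoptimalground} and used as a black box (the only related argument in the paper is the proof of Corollary~\ref{co:GEESparse}, which shows how to replace the exact block-encoding by an approximate sparse-access one while preserving the PROJ/REF structure of Lin--Tong). Your sketch is exactly the Lin--Tong construction that underlies the cited result---a QSVT step-function filter of degree $O((\alpha/\epsilon)\log(1/\eta))$, amplitude comparison at scale $\gamma$ to decide each threshold test, binary search over $O(\log(\alpha/\epsilon))$ thresholds, and a majority vote for confidence boosting---so there is nothing to contrast with the paper's own proof, and your outline is correct and matches the source.
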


We can modify the quantum algorithm in Ref.~\cite{Lin2020nearoptimalground}, for which the availability of an exact block-encoding of $H$ is assumed, to an algorithm built upon sparse-access to $H$ and thus an approximate block-encoding of it.

\begin{corollary}
    Let the assumptions in Theorem~\ref{th:GEE} holds, except that on the $(\alpha,a,0)$-block-encoding of $H$ replaced with the following: $H$ is $s$-sparse we have access to $O^H_\mathrm{row}$, $O^H_\mathrm{col}$ and $O^H_\mathrm{ent}$ that act as Eqs.~(\ref{eq:OARowCol}) and (\ref{eq:OAEnt}) with $A=H$.
    Then, for any $\delta\in(0,1)$ and $\epsilon>0$, there exists a quantum algorithm $\proc{EstEig}(H,\epsilon,\delta)$ that outputs an $\epsilon$-approximation of $\lambda_1$ with probability at least $1-\delta$, using $O^H_\mathrm{row}$, $O^H_\mathrm{col}$ and $O^H_\mathrm{ent}$
    \begin{align}
        &O\left(\frac{s\|H\|_\mathrm{max}}{\gamma \epsilon}\log\left(\frac{s\|H\|_\mathrm{max}}{ \epsilon}\right)\log\left(\frac{1}{ \gamma}\right)\right. \nonumber \\
        &\qquad\left.\times\log\left(\frac{\log(s\|H\|_\mathrm{max}/\epsilon)}{ \delta}\right)\right)
        \label{eq:GEEUHQuerySpOra}
    \end{align}
    times, $U_{\ket{\phi_0}}$
    \begin{equation}
        O\left(\frac{1}{\gamma}\log\left(\frac{s\|H\|_\mathrm{max}}{ \epsilon}\right)\log\left(\frac{\log(s\|H\|_\mathrm{max}/\epsilon)}{ \delta}\right)\right)
        \label{eq:GEEUiniQuerySpOra}
    \end{equation}
    times,
    \begin{align}
        &O\left(\frac{s\|H\|_\mathrm{max}}{\gamma \epsilon}\log\left(\frac{s\|H\|_\mathrm{max}}{ \epsilon}\right)\log\left(\frac{1}{ \gamma}\right)\right.\nonumber \\
        &\qquad \times \log\left(\frac{\log(s\|H\|_\mathrm{max}/\epsilon)}{ \delta}\right) \nonumber \\
        &\qquad\left. \times\left(n+\log^{5/2}\left(\frac{s^3\|H\|_\mathrm{max}^2\log^2(1/\gamma)}{\gamma^2\epsilon^2}\right)\right)\right)
        \label{eq:GEEGateNumSpOra}
    \end{align}
    additional 1- and 2-qubit gates, and
    \begin{equation}
        O\left(n+\log\left(\frac{1}{ \gamma}\right)+\log^{5/2}\left(\frac{s^3\|H\|_\mathrm{max}^2\log^2(1/\gamma)}{\gamma^2\epsilon^2}\right)\right)
        \label{eq:GEEQubitNumSpOra}
    \end{equation}
    qubits.
    \label{co:GEESparse}
\end{corollary}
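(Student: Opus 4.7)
The plan is to deduce the corollary from Theorem~\ref{th:GEE} by first building an approximate block-encoding of $H$ out of the sparse-access oracles via Theorem~\ref{th:BlEncSp}, then absorbing the block-encoding error into the overall accuracy budget of the eigenvalue-estimation algorithm.

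Concretely, I would first apply Theorem~\ref{th:BlEncSp} with an internal tolerance $\epsilon_\mathrm{BE}$ (to be fixed below) to obtain a unitary $U_H$ that is an $(\alpha,n+3,\epsilon_\mathrm{BE})$-block-encoding of $H$ with $\alpha=s\|H\|_\mathrm{max}$. A single use of $U_H$ makes one query each to $O^H_\mathrm{row}$ and $O^H_\mathrm{col}$, two queries to $O^H_\mathrm{ent}$, and uses $O(n+\log^{5/2}(s^2\|H\|_\mathrm{max}/\epsilon_\mathrm{BE}))$ additional 1- and 2-qubit gates together with $O(\log^{5/2}(s^2\|H\|_\mathrm{max}/\epsilon_\mathrm{BE}))$ ancilla qubits. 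I would then plug $U_H$ into the algorithm of Theorem~\ref{th:GEE} together with the given $U_{\ket{\phi_1}}$ and run it with accuracy $\epsilon/2$ and failure probability $\delta$.

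To justify that this returns the desired $\epsilon$-approximation of $\lambda_1$, I would interpret the approximate block-encoding $U_H$ as an exact block-encoding of some hermitian $\tilde H$ with $\|\tilde H-H\|\le\epsilon_\mathrm{BE}$. The algorithm of Theorem~\ref{th:GEE} then returns an $\epsilon/2$-approximation of the smallest eigenvalue $\tilde\lambda_1$ of $\tilde H$, while Weyl's inequality gives $|\tilde\lambda_1-\lambda_1|\le\epsilon_\mathrm{BE}$. A parallel eigenvector-perturbation argument shows $|\braket{\phi_1|\tilde\psi_1}|\ge\gamma/2$ whenever $\epsilon_\mathrm{BE}$ is sufficiently small, changing only constants in the bounds of Theorem~\ref{th:GEE}. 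Substituting $\alpha=s\|H\|_\mathrm{max}$, $a=n+3$, and the chosen $\epsilon_\mathrm{BE}$ into Eqs.~(\ref{eq:GEEUHQuery})--(\ref{eq:GEEQubitNum}), and multiplying the $U_H$-call count by the per-call cost of Theorem~\ref{th:BlEncSp}, then yields the bounds (\ref{eq:GEEUHQuerySpOra})--(\ref{eq:GEEQubitNumSpOra}).

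The main obstacle is pinning down how small $\epsilon_\mathrm{BE}$ must be. A naive Weyl bound suggests $\epsilon_\mathrm{BE}=\Theta(\epsilon)$, but the algorithm of Theorem~\ref{th:GEE} invokes $U_H$ polynomially many times inside a QSVT circuit, so in principle block-encoding errors can compound. I would invoke the robust-QSVT analysis of Ref.~\cite{Gilyen2019}: a QSVT circuit of degree $d$ built from an $(\alpha,a,\epsilon_\mathrm{BE})$-block-encoding realizes the target polynomial transformation up to an operator-norm error that scales linearly with $d$ and $\epsilon_\mathrm{BE}$. Demanding that this error not degrade the $\epsilon/2$ accuracy of the eigenvalue estimate forces $\epsilon_\mathrm{BE}$ to be of order $\gamma\epsilon/(s\|H\|_\mathrm{max})$ up to logarithmic factors, and this choice is exactly what produces the $\log^{5/2}\!\bigl(s^3\|H\|_\mathrm{max}^2\log^2(1/\gamma)/(\gamma^2\epsilon^2)\bigr)$ argument appearing in (\ref{eq:GEEGateNumSpOra}) and (\ref{eq:GEEQubitNumSpOra}).
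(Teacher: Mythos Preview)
Your overall strategy---construct an approximate block-encoding from the sparse-access oracles via Theorem~\ref{th:BlEncSp} and feed it into the algorithm of Theorem~\ref{th:GEE}---matches the paper's. But two steps in your argument do not go through as written.

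First, the eigenvector-perturbation step is a genuine gap. Theorem~\ref{th:GEE} makes no spectral-gap assumption on $H$, so you cannot conclude $|\braket{\phi_1|\tilde\psi_1}|\ge\gamma/2$ from $\|\tilde H-H\|\le\epsilon_{\mathrm{BE}}$ alone, no matter how small $\epsilon_{\mathrm{BE}}$ is: if $\lambda_2-\lambda_1$ is tiny, the ground eigenvector of $\tilde H$ can be essentially orthogonal to $\ket{\psi_1}$. The paper avoids this entirely by \emph{not} treating the algorithm as running on a perturbed $\tilde H$. Instead it opens up the algorithm of Ref.~\cite{Lin2020nearoptimalground}, observes that $U_H$ enters only through the projector $\mathrm{PROJ}(\mu,\epsilon/2\alpha,\epsilon')$, and shows directly that replacing $U_H$ by the approximate $\tilde U_H$ yields a $\widetilde{\mathrm{PROJ}}$ that still satisfies the threshold property \eqref{eq:PROJ} with respect to the \emph{original} $\lambda_1$ and $\ket{\psi_1}$. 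No perturbed eigenvector ever appears.

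Second, the QSVT robustness you invoke has the wrong scaling. The relevant bound is Lemma~22 in the full version of Ref.~\cite{Gilyen2019}, which gives an error of order $4d_{\mathrm{poly}}\sqrt{\|\tilde H-H\|/\alpha}$---linear in the polynomial degree but only square-root in the block-encoding error, not linear. Consequently the paper must take $\tilde\epsilon=\bigl(\epsilon'/(8d_{\tilde\alpha,\epsilon,\epsilon'})\bigr)^2\tilde\alpha$ with $\epsilon'=\gamma/2$ and $d_{\tilde\alpha,\epsilon,\epsilon'}=O\bigl((\tilde\alpha/\epsilon)\log(1/\epsilon')\bigr)$, i.e., $\tilde\epsilon$ is of order $\gamma^2\epsilon^2/\bigl(s\|H\|_{\mathrm{max}}\log^2(1/\gamma)\bigr)$. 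Plugging this into $s^2\|H\|_{\mathrm{max}}/\tilde\epsilon$ is what produces the argument $s^3\|H\|_{\mathrm{max}}^2\log^2(1/\gamma)/(\gamma^2\epsilon^2)$ inside the $\log^{5/2}$ of \eqref{eq:GEEGateNumSpOra}--\eqref{eq:GEEQubitNumSpOra}. Your choice $\epsilon_{\mathrm{BE}}\sim\gamma\epsilon/(s\|H\|_{\mathrm{max}})$ would give $s^3\|H\|_{\mathrm{max}}^2/(\gamma\epsilon)$ instead, which does not match, and with the square-root lemma that choice is in fact too large for the algorithm to succeed.
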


\begin{proof}
    See Sec.~\ref{sec:ProofGEESparse}.
\end{proof}

\section{Improved quantum algorithm for eigenvalue estimation for differential operators \label{sec:ourAlgo}}

Now, let us present our quantum algorithm for estimating the smallest eigenvalue of a differential operator $\mathcal{L}$ in the form of Eq.~\eqref{eq:SturmLiou}.
We begin by making assumptions about access to quantum circuits used in our algorithms.
The first one is the circuits to compute the coefficient $a_i$ in $\mathcal{L}$.

\begin{assumption}
    For each $i\in\{0,1,\ldots,d\}$, we have access to a quantum circuit $O_{a_i}$ that acts as
    \begin{equation}
        O_{a_i} \ket{\mathbf{x}}_\mathrm{bin}\ket{0}= \ket{\mathbf{x}}_\mathrm{bin}\ket{a_i(\mathbf{x})}
        \label{eq:Oai}
    \end{equation}
    for any $\mathbf{x}\in\mathcal{D}$.
    \label{ass:Oai}
\end{assumption}

As long as $a_i$ is given by some explicit formula, which is the case in the stochastic inflation case considered below, $O_{a_i}$ can be implemented with arithmetic circuits.

The second one is the circuit to generate the quantum state that encodes the test function, which is chosen in advance and assumed to overlap well with the first eigenfunction of $\mathcal{L}$.

\begin{assumption}
    Let $\gamma\in(0,1)$.
    For any $n_\mathrm{gr}\in\mathbb{N}$, we have access to a quantum circuit $O_{\tilde{f}_1,n_\mathrm{gr}}$ that acts as
    \begin{equation}
        O_{\tilde{f}_1,n_\mathrm{gr}} \ket{0}= \ket{\mathbf{v}_{\tilde{f}_1,n_\mathrm{gr}}},
    \end{equation}
    where $\tilde{f}_1:\mathcal{D}\rightarrow\mathbb{R}$ is a function on $\mathcal{D}$ and satisfies
    \begin{equation}
        |\braket{\mathbf{v}_{\tilde{f}_1,n_\mathrm{gr}} | \mathbf{v}_{f_1,n_\mathrm{gr}}}|\ge\gamma.
        \label{eq:f1f1til}
    \end{equation}
    \label{ass:Of1}
\end{assumption}

As mentioned in Sec.~\ref{sec:VectorEncode}, for the function $\tilde{f}_1$ written by an explicit formula, there are various proposals on the way to generate the state $\ket{\mathbf{v}_{\tilde{f}_1,n_\mathrm{gr}}}$, and we now assume that some of them are available.

Then, we present the main theorem on our quantum algorithm.

\begin{theorem}
    Consider Problem \ref{prob:eigen} under Assumptions \ref{ass:Oai} and \ref{ass:Of1}.
    Let $\epsilon\in\mathbb{R}$ and $\delta\in(0,1)$.
    Then, there exists a quantum algorithm that outputs an $\epsilon$-approximation of $\lambda_1$ with probability at least $1-\delta$, making
    \begin{equation}
    O\left(\frac{d\xi}{\gamma}\log \xi \log\left(\frac{1}{ \gamma}\right)\log\left(\frac{\log\xi}{\delta}\right)\right)
        \label{eq:EigEstCompOai}
    \end{equation}
    uses of $O_{a_0},
    \cdots,O_{a_d}$ and arithmetic circuits and
    \begin{equation}
        O\left(\frac{1}{\gamma}\log \xi \log\left(\frac{\log\xi}{\delta}\right)\right)
        \label{eq:EigEstCompOf1}
    \end{equation}
    uses of $O_{\tilde{f}_1,n_\mathrm{gr}}$ with
    \begin{equation}
        n_\mathrm{gr}=\left\lceil\max\left\{\sqrt{\frac{2C^1_\mathcal{L}}{\epsilon}},\sqrt{\frac{2D^1_\mathcal{L}}{1-\eta(\gamma)}}(U-L)^{d/4}\right\}\right\rceil.
        \label{eq:ngrForEps}
    \end{equation}
    Here,
    \begin{align}
        &\xi\coloneqq \nonumber \\
        & \ \frac{d}{\epsilon}\left(\frac{d a_\mathrm{max}}{(U-L)^2} \times \max\left\{\frac{C^1_\mathcal{L}}{\epsilon},\frac{D^1_\mathcal{L}(U-L)^{\frac{d}{2}}}{1-\eta(\gamma)}\right\}+ a_{0,\mathrm{max}}\right),
    \end{align}
    \begin{equation}
    a_{\mathrm{max}}\coloneqq\max_{\substack{i\in[d] \\ \mathbf{x}\in\overline{\mathcal{D}}}} a_i(\mathbf{x}) \qc a_{0,\mathrm{max}}\coloneqq\max_{ \mathbf{x}\in\overline{\mathcal{D}}} a_0(\mathbf{x})
    \end{equation}
    and, for $x\in(0,1)$,
    \begin{equation}
        \eta(x)\coloneqq\frac{1}{2}\left(x^2+\sqrt{4-5x^2+x^4}\right).
    \end{equation}
    \label{th:main}
\end{theorem}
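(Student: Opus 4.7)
The plan is to reduce Problem~\ref{prob:eigen} to the matrix eigenvalue problem for the finite-difference approximation $L_{n_\mathrm{gr}}$ from Eq.~\eqref{eq:FDMat} and then invoke the sparse-input version of the QSVT-based eigenvalue estimator, namely Corollary~\ref{co:GEESparse}. The total error budget is split in half: I choose $n_\mathrm{gr}$ large enough that Theorem~\ref{th:EigConv} yields $|\lambda^1_{n_\mathrm{gr}}-\lambda_1|\le\epsilon/2$, which is equivalent to $n_\mathrm{gr}\ge\sqrt{2C^1_\mathcal{L}/\epsilon}$ and forms the first branch of the $\max$ in Eq.~\eqref{eq:ngrForEps}, and then run \proc{EstEig} with accuracy $\epsilon/2$ on $L_{n_\mathrm{gr}}$. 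The triangle inequality then yields an $\epsilon$-approximation of $\lambda_1$ at the desired confidence level.

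Second, I would show that the overlap hypothesis of Corollary~\ref{co:GEESparse} is met with a parameter $\gamma_\mathrm{eff}$ of the same order as the $\gamma$ given by Assumption~\ref{ass:Of1}. Assumption~\ref{ass:Of1} provides $|\braket{\mathbf{v}_{\tilde f_1,n_\mathrm{gr}}|\mathbf{v}_{f_1,n_\mathrm{gr}}}|\ge\gamma$ in the Euclidean-normalized (amplitude-encoded) inner product, while Theorem~\ref{th:EigConv} controls $\|\mathbf{v}^1_{n_\mathrm{gr}}-\mathbf{v}_{f_1,n_\mathrm{gr}}\|_\mathrm{max}$ under a different normalization. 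The key observation is that the normalized overlap is invariant between the Euclidean and $\|\cdot\|_{n_\mathrm{gr}}$ inner products, since the common weight $h^d$ cancels. Converting the max-norm bound $D^1_\mathcal{L}/n_\mathrm{gr}^2$ into a $\|\cdot\|_{n_\mathrm{gr}}$-bound inflates it by at most $(h n_\mathrm{gr})^{d/2}\le(U-L)^{d/2}$, and a three-unit-vector perturbation lemma (whose quantitative form is encoded in the function $\eta$) then produces an explicit positive lower bound on $|\braket{\mathbf{v}_{\tilde f_1,n_\mathrm{gr}}|\hat{\mathbf{v}}^1_{n_\mathrm{gr}}}|$ of order $\gamma$, provided $(U-L)^{d/2}D^1_\mathcal{L}/n_\mathrm{gr}^2\le(1-\eta(\gamma))/2$. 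This is precisely the second branch of the $\max$ in Eq.~\eqref{eq:ngrForEps}.

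Third, I would read off the block-encoding parameters for $L_{n_\mathrm{gr}}$. Inspection of Eq.~\eqref{eq:FDMat} shows that $L_{n_\mathrm{gr}}$ is $(2d+1)$-sparse with $\|L_{n_\mathrm{gr}}\|_\mathrm{max}\le 2d\,a_\mathrm{max}/h^2+a_{0,\mathrm{max}}$. Plugging in $h=(U-L)/(n_\mathrm{gr}+1)$ together with the value of $n_\mathrm{gr}$ from Eq.~\eqref{eq:ngrForEps} produces exactly the quantity $\xi$ (up to constants). The sparse-access oracles required by Corollary~\ref{co:GEESparse} can be built by hand: the row and column oracles are pure index arithmetic on $\mathbf{j}\in[n_\mathrm{gr}]_0^d$ since the nonzero pattern of Eq.~\eqref{eq:FDMat} is explicit, and the entry oracle evaluates the closed-form expression by $O(d)$ calls to the $O_{a_i}$'s on a diagonal entry (summing the $d$ terms) and $O(1)$ calls off-diagonal, together with arithmetic circuits.

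Finally I would substitute these parameters into Corollary~\ref{co:GEESparse} with accuracy $\epsilon/2$, overlap $\gamma_\mathrm{eff}=\Theta(\gamma)$, $s=2d+1$, and the above bound on $\|L_{n_\mathrm{gr}}\|_\mathrm{max}$. The $\widetilde O(s\|L_{n_\mathrm{gr}}\|_\mathrm{max}/(\gamma\epsilon))$ queries to the entry oracle translate into $\widetilde O(d\xi/\gamma)$ queries to the $O_{a_i}$'s and arithmetic circuits after accounting for the extra factor of $d$ per entry-oracle call, giving Eq.~\eqref{eq:EigEstCompOai}; the count for $O_{\tilde f_1,n_\mathrm{gr}}$, which plays the role of $U_{\ket{\phi_0}}$ in Corollary~\ref{co:GEESparse}, is read off directly from Eq.~\eqref{eq:GEEUiniQuerySpOra}, giving Eq.~\eqref{eq:EigEstCompOf1}. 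I expect the main obstacle to be the overlap perturbation argument in the second step: cleanly handling the mismatch between the Euclidean and $\|\cdot\|_{n_\mathrm{gr}}$ normalizations and extracting the sharp threshold $\eta(\gamma)$ require more care than any of the other routine substitutions.
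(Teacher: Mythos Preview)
Your proposal is correct and follows essentially the same approach as the paper: split the error budget in half, pick $n_\mathrm{gr}$ from Eq.~\eqref{eq:ngrForEps} so that Theorem~\ref{th:EigConv} handles the discretization error, build the sparse-access oracles for $L_{n_\mathrm{gr}}$ with $O(d)$ calls to the $O_{a_i}$'s, and feed $s=2d+1$, the max-norm bound, and overlap $\gamma/2$ into Corollary~\ref{co:GEESparse}. Your identification of the overlap perturbation as the crux is on point; the paper isolates exactly this step as two short lemmas---one converting the max-norm bound into $\braket{\mathbf{v}^1_{n_\mathrm{gr}}|\mathbf{v}_{f_1,n_\mathrm{gr}}}\ge\eta(\gamma)$ via the $(U-L)^{d/2}$ factor you mention, and a three-vector lemma that then yields $|\braket{\mathbf{v}^1_{n_\mathrm{gr}}|\mathbf{v}_{\tilde f_1,n_\mathrm{gr}}}|\ge\gamma/2$---matching your outline precisely.
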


\begin{proof}

    Here, we just present the quantum algorithm and postpone the rest of the proof, that is, presenting how to construct $O^{L_{n_\mathrm{gr}}}_\mathrm{row}$, $O^{L_{n_\mathrm{gr}}}_\mathrm{col}$ and $O^{L_{n_\mathrm{gr}}}_\mathrm{ent}$, estimating the accuracy of the output, and the query complexity estimation, to Appendix~\ref{sec:ProofMain}.

    \begin{algorithm}[H]
    \begin{algorithmic}[1]
    \REQUIRE Accuracy $\epsilon\in\mathbb{R}$, success probability $1-\delta\in(0,1)$.

    \STATE Set $n_\mathrm{gr}$ as Eq. \eqref{eq:ngrForEps}.

    \STATE Construct the oracles $O^{L_{n_\mathrm{gr}}}_\mathrm{row}$, $O^{L_{n_\mathrm{gr}}}_\mathrm{col}$ and $O^{L_{n_\mathrm{gr}}}_\mathrm{ent}$.

    \STATE Run $\proc{EstEig}\left(L_{n_\mathrm{gr}},\frac{\epsilon}{2},\delta\right)$ to get an $\frac{\epsilon}{2}$-approximation $\tilde{\Lambda}$ of the smallest eigenvalue of $L_{n_\mathrm{gr}}$.

    \STATE Output $\tilde{\Lambda}$.
    
    \caption{Estimation of the first eigenvalue $\lambda_1$ of $\mathcal{L}$}
    \label{alg:EigEstim}
    \end{algorithmic}
\end{algorithm}
    
\end{proof}

Assuming that we can take a good function $\tilde{f}_1$ with $\gamma=\Theta(1)$, we take only the leading part with respect to $\frac{1}{\epsilon}$ to simplify Eq.~\eqref{eq:EigEstCompOai} as
\begin{equation}
    \widetilde{O}\left(\frac{d^3 a_\mathrm{max} C^1_\mathcal{L}}{(U-L)^2\epsilon^2}\right).
    \label{eq:EigEstCompOaiSimple}
\end{equation}
This evaluation can be decomposed into the following contributions.
For $L_{n_\mathrm{gr}}$, the sparsity is $s=O(d)$, and the max norm is, as seen in Appendix~\ref{sec:ProofMain}, $\|L_{n_\mathrm{gr}}\|_{\rm max}=O(d a_\mathrm{max} C^1_\mathcal{L}/(U-L)^2\epsilon)$ for small $\epsilon$.
Plugging these into Eq.~\eqref{eq:GEEUHQuerySpOra} and noting that $O^{L_{n_\mathrm{gr}}}_\mathrm{row}$, $O^{L_{n_\mathrm{gr}}}_\mathrm{col}$ and $O^{L_{n_\mathrm{gr}}}_\mathrm{ent}$ are implemented by $O(d)$ uses of $O_{a_0}, \cdots,O_{a_d}$ and arithmetic circuits as shown in Appendix~\ref{sec:ProofMain}, we reach Eq.~\eqref{eq:EigEstCompOaiSimple}.

Let us comment on the improvement of the complexity in our algorithm compared to the previous algorithm in Ref.~\cite{Szkopek2005}.
That algorithm also takes a kind of finite difference approximation $L^\prime$ for $\mathcal{L}$, which is different from Eq. \eqref{eq:FDMat}, but the estimation of the eigenvalue of $L^\prime$ is done by the Abrams-Lloyd algorithm~\cite{Abrams1999}.
It is the combination of the Hamiltonian simulation with $H=L^\prime$, which means applying the time evolution operator $\exp(-iL^\prime t)$ to quantum states, and the quantum Fourier transform.
Ref.~\cite{Szkopek2005} did not present any complexity upper bound of their algorithm that shows the scaling on all the parameters collectively but only stated the scaling of the complexity on the accuracy $\epsilon$.
According to it, their algorithm applied to the current problem has the complexity of order $\widetilde{O}(1/\epsilon^3)$.
This complexity is affected by that of the Hamiltonian simulation method adopted in Ref.~\cite{Szkopek2005}, which is a kind of the Suzuki-Trotter decomposition and not the state-of-the-art method based on \ac{QSVT}.
Compared to this, our algorithm uses the eigenvalue estimation method in Ref.~\cite{Lin2020nearoptimalground}, which does not involve the Hamiltonian simulation but is a binary search algorithm utilizing the \ac{QSVT}-based eigenvalue thresholding.
Thus, our algorithm improves the complexity as Eq.~\eqref{eq:EigEstCompOaiSimple}.

\section{Application to estimating the decay rate of the perturbation distribution tail in stochastic inflation}\label{sec: inflation}

We exemplify an application of the quantum algorithm we showed to \emph{cosmic inflation} as physical interest.
Cosmic inflation~\cite{Starobinsky:1980te,Sato:1981qmu,Guth:1980zm,Linde:1981mu,Albrecht:1982wi,Linde:1983gd} is the hypothetical phase of accelerated expansion in the early universe.
Not only can it make our universe statistically homogeneous and isotropic from a global perspective, but inflation can also bring primordial fluctuations in energy density or the spacetime metric from the quantum vacuum fluctuation.
Such primordial perturbations can be seeds of the current cosmological structures such as galaxies and clusters, and these observations support the existence of inflation (see, e.g., Ref.~\cite{Planck:2018jri}).

The fluctuating dynamics of inflation is often described in the \emph{stochastic formalism of inflation}, also known as \emph{stochastic inflation} (see Refs.~\cite{Starobinsky:1982ee,Starobinsky:1986fx,Nambu:1987ef,Nambu:1988je,Kandrup:1988sc,Nakao:1988yi,Nambu:1989uf,Mollerach:1990zf,Linde:1993xx,Starobinsky:1994bd} for the first works and also Ref.~\cite{Cruces:2022imf} for a recent review).
It is understood as an effective theory of fields coarse-grained on a superHubble scale.
We suppose that inflation is driven by $d$ canonical real scalar fields $\bm{\phi}=(\phi_1,\phi_2,\cdots\phi_d)$ called \emph{inflatons}.
Then, in stochastic inflation, the inflaton fields at each spatial point behave as almost independent stochastic processes.
In the so-called slow-roll limit, the stochastic differential equation which the inflatons follow at each spatial point is exhibited as (see, e.g., Ref.~\cite{Vennin:2015hra})
\bme{\label{eq: EoM}
    \!\!\!\!\!\!
    \dd{\bm\phi}\!(N)\!=\!-\Mpl^2\frac{\nabla_{\bm{\phi}}v(\bm{\phi}(N))}{v(\bm{\phi}(N))}\dd N+\sqrt{2v(\bm{\phi}(N))}\dd{W(N)},
}
where $\Mpl$ is the reduced Planck mass, $v=V/(24\pi^2\Mpl^4):\bbR^d\to\bbR$ is the reduced potential of the inflatons, the e-folding number $N$ is the time variable normalised by the Hubble parameter, and $W(N)$ is the $d$-dimensional Wiener process independent over the Hubble distance.\footnote{The stochastic term should be understood as the It\^o integral~\cite{Tokuda:2017fdh,Tokuda:2018eqs}. However, the It\^o integral breaks the covariance under the general coordinate transformation on the inflatons' target manifold if it is not Euclidean $\bbR^d$ but a more general one $\calM$~\cite{Pinol:2018euk}. In such a case, the inflatons' derivative $\dd{\phi_i}$ should be replaced by the \emph{It\^o-covariant} one. See Ref.~\cite{Pinol:2020cdp} for the details.}
We omitted the spatial label $\bfx$ as hereafter we only deal with the one-point dynamics in this paper.

Due to their stochastic behaviour, inflation continues for different times at each spatial point even if the inflaton fields have the same initial value $\bm{\phi}_0$. 
Supposing that inflation happens in a certain subset of the target manifold $\Omega\subset\bbR^I$ and ends at its boundary $\partial\Omega$, the inflation duration is given by the first passage time denoted $\calN(\bm{\phi}_0)$ from $\bm{\phi}_0$ to $\partial\Omega$.
According to the $\delta N$ formalism~\cite{Starobinsky:1985ibc,Salopek:1990jq,Sasaki:1995aw,Sasaki:1998ug,Wands:2000dp,Lyth:2004gb,Lyth:2005fi}, the fluctuation in this first passage time is understood as the conserved curvature perturbation $\zeta$ (fluctuation in the spatial curvature), which is converted to the energy density contrast in the later universe.
Its correlation functions over different spatial points, though we do not explicitly calculate them in this paper, are calculated by the \ac{PDF} of $\calN(\bm{\phi}_0)$ and its dependence on $\bm{\phi}_0$ known as the \emph{stochastic-$\delta\calN$ technique}~\cite{Fujita:2013cna,Fujita:2014tja,Vennin:2015hra,Ando:2020fjm,Tada:2021zzj,Animali:2024jiz}.
Hence, the problem of interest reduces to solving the \ac{PDF} of $\calN$ from each field value $\bm{\phi}\in\Omega$, omitting the subscript $0$ here and hereafter for simplicity.

The \ac{PDF} $P(\bm{\phi}\mid N)$ of the inflatons $\bm{\phi}$ at a certain time $N$ follows the \ac{FP} equation equivalent to the original stochastic differential equation~\eqref{eq: EoM} as
\bae{
    \partial_NP(\bm{\phi}\mid N)&=\calL_\FP P(\bm{\phi}\mid N) \nonumber \\
    &\bmbe{\coloneqq\Mpl^2\left[\sum_i\partial_{\phi_i}\pqty{\frac{v_i(\bm{\phi})}{v(\bm{\phi})}P(\bm{\phi}\mid N)} \right. \\
    \left.+\sum_i\partial_{\phi_i}^2\pqty{v(\bm{\phi})P(\bm{\phi}\mid N)}\right],}
}
where $v_i=\partial_{\phi_i}v$.
On the other hand, the \ac{PDF} of the first passage time $P_\FPT(\calN\mid\bm{\phi})$ is known to follow the adjoint one~\cite{Vennin:2015hra}:
\bae{\label{eq: adjoint FP}
    \partial_\calN P_\FPT(\calN\mid\bm{\phi})=\calL_\FP^\dagger P_\FPT(\calN\mid\bm{\phi}),
}
with the adjoint \ac{FP} operator
\bae{
    \frac{1}{\Mpl^2}\calL_\FP^\dagger=-\sum_i\frac{v_i}{v}\partial_{\phi_i}+v\sum_i\partial_{\phi_i}^2,
}
associated with the inner product,
\bae{
    \braket{f(\bm{\phi})|g(\bm{\phi})}\coloneqq\int\dd{\bm{\phi}}f(\bm{\phi})g(\bm{\phi}),
}
that is, $\braket{f(\bm{\phi})|\calL_\FP g(\bm{\phi})}=\braket{\calL_\FP^\dagger f(\bm{\phi})|g(\bm{\phi})}$.
The eigenvalues of $\calL_\FP^\dagger$ are negative in the set of functions valid as a PDF of the first passage time \cite{Ezquiaga:2019ftu}, and we hereafter consider the eigenvalues of $-\calL_\FP^\dagger$, which are positive.
Though $\calL_\FP^\dagger$ is not Hermite (not self-adjoint; $\calL_\FP^\dagger\neq\calL_\FP$), one can Hermitise it by defining the following Hermitian operator~\cite{Ezquiaga:2019ftu},
\bae{
    \widetilde{\calL_\FP}\coloneqq w^{1/2}(\bm{\phi})\calL_\FP^\dagger w^{-1/2}(\bm{\phi}),
}
where
\bae{
    w(\bm{\phi})=\ee^{1/v(\bm{\phi})}/v(\bm{\phi}).
}
They share the same eigenvalues, while the corresponding eigenfunctions are different by the $w^{-1/2}$ factor:
\bae{\label{eq: eigenvalue equation}
    -\widetilde{\calL_\FP}\Psi_n(\bm{\phi})=\Lambda_n\Psi_n(\bm{\phi}) \, \Leftrightarrow \, -\calL_\FP^\dagger\tilde{\Psi}_n(\bm{\phi})=\Lambda_n\tilde{\Psi}_n(\bm{\phi}),
}
where
\bae{
    \tilde{\Psi}_n(\bm{\phi})=w^{-1/2}(\bm{\phi})\Psi_n(\bm{\phi}).
}
By some calculations, we see that
\bae{
    &-\frac{\widetilde{\calL_\FP}}{\Mpl^2}= \nonumber \\
    & \quad \sum_i\left[-\partial_{\phi_i}\left(v\partial_{\phi_i}\right)-\frac{2v^2(1+v)v_{ii}-(1+4v+v^2){v_i}^2}{4v^3}\right],
}
where $v_{ii}\coloneqq\partial_{\phi_i}^2 v$, and thus $\widetilde{\calL_\FP}$ takes the form of Eq.~\eqref{eq:SturmLiou}.

The eigensystem of the adjoint \ac{FP} operator has a physically interesting implication.
The adjoint \ac{FP} equation~\eqref{eq: adjoint FP} can formally be solved as
\bae{
    P_\FPT(\calN\mid\bm{\phi})=\exp[(\calN-\epsilon)\calL_\FP^\dagger]P_\FPT(\calN=\epsilon\mid\bm{\phi}),
}
with a certain positive parameter $\epsilon$.
Expanding $P_\FPT(\calN=\epsilon\mid\bm{\phi})$ by the eigensystem as\footnote{If the target space $\Omega$ is non-compact, the spectrum of the eigensystem is not discrete in general.}
\bae{
    P_\FPT(\calN=\epsilon\mid\bm{\phi})=\sum_n\alpha_n^{(\epsilon)}\tilde{\Psi}_n(\bm{\phi}),
}
it leads to the solution
\bae{
    P_\FPT(\calN\mid\bm{\phi})=\sum_n\alpha_n^{(\epsilon)}\tilde{\Psi}_n(\bm{\phi})\ee^{-\Lambda_n(\calN-\epsilon)}.
}
One may take the limit $\epsilon\to0$ as
\bae{\label{eq: expansion of PFPT}
    P_\FPT(\calN\mid\bm{\phi})=\sum_n\alpha_n\tilde{\Psi}_n(\bm{\phi})\ee^{-\Lambda_n\calN},
}
where $\alpha_n=\lim_{\epsilon\to0}\alpha_n^{(\epsilon)}$.
The functional form of Eq.~\eqref{eq: expansion of PFPT} suggests that the \ac{PDF} of $\calN$ (and hence the primordial perturbation) has an exponentially heavy tail~\cite{Pattison:2017mbe,Ezquiaga:2019ftu,Figueroa:2020jkf} in contrast to the na\"ive expectation that the physical perturbations are well described by the Gaussian distribution.
In particular, the first eigenvalue, especially if it is of order unity, can exhibit a significant effect on the large-$\calN$ (and hence large perturbation) probability and drastically change the abundance of astrophysical objects.
This is why we want the method to calculate the first eigenvalue and the corresponding eigenfunction of the adjoint \ac{FP} operator (or equivalently the Hermite \ac{FP} operator $\widetilde{\calL_\FP}$).
However, this is a challenging computational task in classical computing, especially when the number of fields $d$ is much larger than 1, and our quantum eigenvalue-finding algorithm proposed in Sec. \ref{sec:ourAlgo} may be beneficial.

Let us briefly discuss the behaviour of the Hermite FP operator before moving to specific examples.
First of all, because $P_\FPT(\calN>0\mid\bm{\phi}\in\partial\Omega)=0$ by definition of $\calN$ and $\partial\Omega$,\footnote{$P_\FPT(\calN\mid\bm{\phi}\in\partial\Omega)=\delta(\calN)$ according to the conservation of the probability.}
the boundary condition for the eigenfunctions is given by
\bae{
    \tilde{\Psi}_n(\bm{\phi}\in\partial\Omega)=0 \, \Leftrightarrow \,
    \Psi_n(\bm{\phi}\in\partial\Omega)=0.
}
In a single-field model, $d=1$ (and $\phi$ denotes $\phi_1$ for brevity), the eigenvalue equation~\eqref{eq: eigenvalue equation} reduces to
\bae{
    \pqty{\partial_\phi^2+\frac{v'}{v}\partial_\phi+\omega_n^2}\Psi_n(\phi)=0,
}
with
\bae{\label{eq: omega2}
    \omega_n^2\simeq-\frac{1-2\Lambda_n\calP_\zeta-2\eta_\sto}{2\Mpl^2v\calP_\zeta}.
}
Here, we define $\calP_\zeta\coloneqq2v^3/({v'}^2\Mpl^2)>0$ and $\eta_\sto\coloneqq v^2v''/{v'}^2$, and we suppose $0<v\ll1$ in order for the inflation energy scale to be well below the Planck scale.
$\calP_\zeta$ is related to the amplitude of the curvature perturbation $\zeta$ in the perturbative evaluation.
$\eta_\sto$ is called \emph{stochasticity parameter} which indicates the magnitude of the stochastic correction to the perturbative evaluation~\cite{Vennin:2015hra}.
It is decomposed as $\eta_\sto=\eta_V\calP_\zeta/2$ where $\eta_V\coloneqq \Mpl^2v''/v^2$ is known as the second slow-roll parameter and supposed to be small for single-field slow-roll inflation.
Therefore, if the dynamics is well perturbative as $\calP_\zeta\ll1$, $\eta_\sto$ is also small, and then for $\Lambda_n\sim O(1)$, $\omega_n^2$ becomes negative, which means that the eigenfunction is not normalizable.
That is, unless the perturbativity is broken as $\calP_\zeta\gtrsim1$ in some region of the target space, $\widetilde{\calL_\FP}$ have no order-unity eigenvalue, and thus there is no interesting physics caused by the large-$\calN$ probability. 
Though this condition can be relaxed in multi-field cases, the above discussion motivates us to seek the situation with $\calP_\zeta\gtrsim1$.
Then, we below focus on models with a flat point $v'=0$ at which $\calP_\zeta$ can be divergent.

Although we would like to run our quantum algorithm for concrete problems including those with many fields, it is impossible today since there is no large-scale fault-tolerant quantum computer.
Instead, to see if our method is promising for the eigenvalue problem in stochastic inflation, we take some classically (or even analytically) tractable cases with a few fields and check that we can take a test function $\tilde{f}_1$ overlapping with the true first eigenfunction well.
With this condition satisfied, it is expected that our quantum algorithm finds the first eigenvalue efficiently.
Specifically, expecting that the first eigenfunction has a simple functional shape with a single bump and no node in many cases, we take Gaussian test functions, which seem to work at least in the examined cases below.

\subsection{Quantum well toy-model \label{sec:qwell}}

\begin{figure*}
	\centering
	\begin{tabular}{c}
		\begin{minipage}{0.5\hsize}
			\centering
			\includegraphics[width=0.95\hsize]{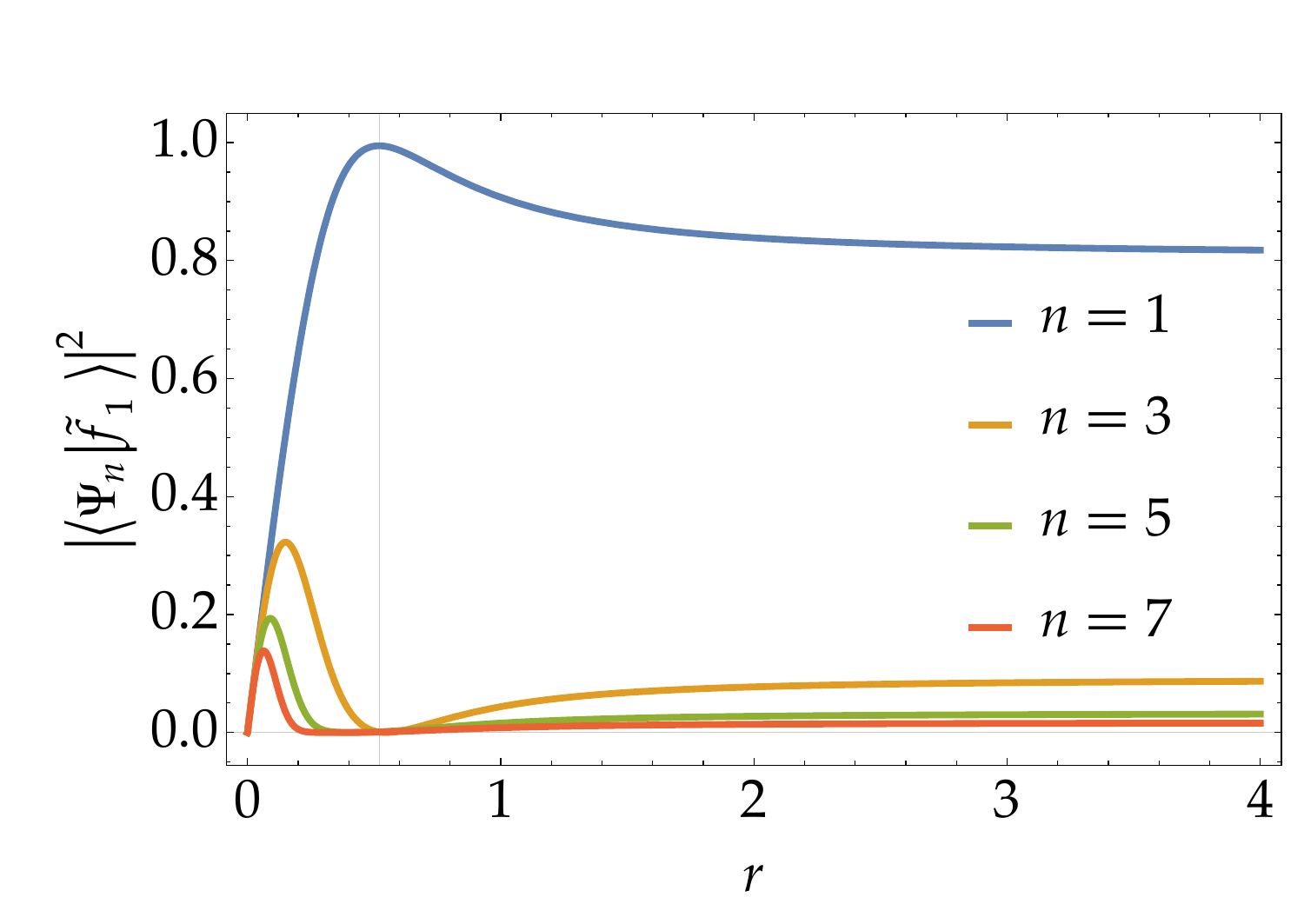}
		\end{minipage}
		\begin{minipage}{0.5\hsize}
			\centering
			\includegraphics[width=0.95\hsize]{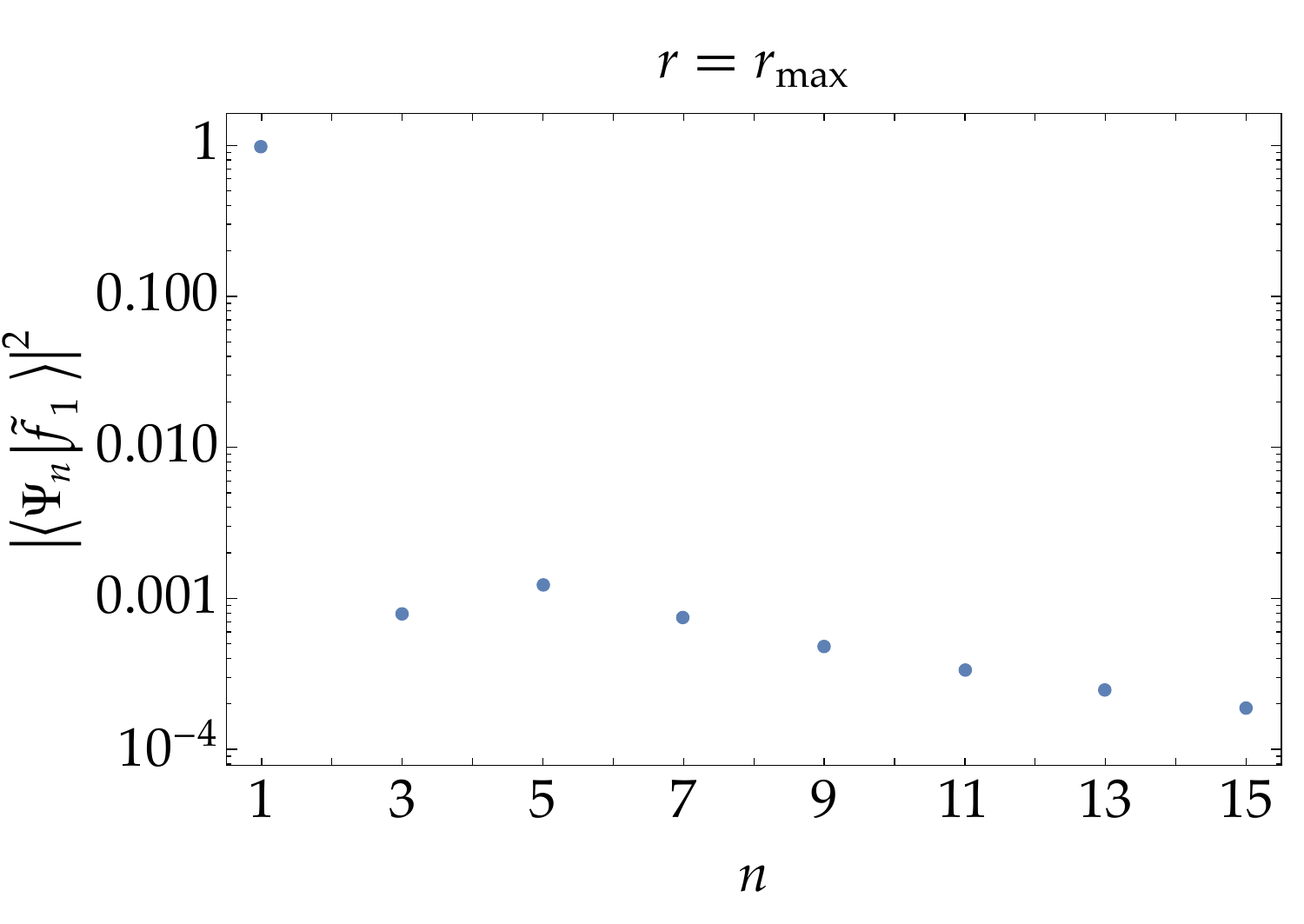}
		\end{minipage}
	\end{tabular}
	\caption{\emph{Left}: the absolute values squared of the inner product~\eqref{eq: f1tPsin QWell hill} as functions of the $r$ parameter of the Gaussian test function $\tilde{f}_1$.
    The inner product of $n=1$ is maximised at $r=r_\umax\sim0.52$ shown by the vertical thin line and reaches $\sim0.99$. \emph{Right}: the inner product for each $n$ at $r=r_\umax$.}
	\label{fig: QWell hilltop}
\end{figure*}

Let us first see the simplest single-field quantum well model as a toy example, reviewing Refs.~\cite{Pattison:2017mbe,Ezquiaga:2019ftu}.
As a zeroth order approximation of hilltop inflation~\cite{Linde:1981mu,Albrecht:1982wi}, we suppose that the inflaton potential has constant and compact support and is bounded by the absorbing (i.e., Dirichlet) boundaries:
\bae{
    v(\phi\in\Omega)=v_0 \qc
    \Omega=[-\phi_\uf,\phi_\uf] \qc
    \partial\Omega=\Bqty{-\phi_\uf,\phi_\uf},
}
where $v_0$ and $\phi_\uf$ are positive parameters.
In this simplest setup, the normalised (i.e., $\braket{\Psi_n|\Psi_m}=\delta_{nm}$) eigenfunctions are easily obtained as
\bae{
    \Psi_n(\phi)=\frac{1}{\sqrt{\phi_\uf}}\sin\bqty{n\pi\frac{\phi+\phi_\uf}{2\phi_\uf}} \qc
    n=1,2,3,\cdots,
}
with the eigenvalues
\bae{
    \Lambda_n=\frac{n^2\pi^2\Mpl^2v_0}{4\phi_\uf^2}.
}
If one chooses the normalised Gaussian
\bae{
    \tilde{f_1}(\phi)=\pqty{\sqrt{\pi}r\phi_\uf\erf\qty(\frac{1}{r})}^{-1/2}\exp(-\frac{\phi^2}{2r^2\phi_\uf^2})
}
with a parameter $r>0$ as the test function,
its inner products with the eigenfunctions, which lead to the $\gamma$ parameter in Eq.~\eqref{eq:f1f1til}, are given by
\bme{\label{eq: f1tPsin QWell hill}
	\braket{\tilde{f}_1|\Psi_n}\simeq\frac{\pi^{1/4}\sqrt{r}}{\sqrt{2\erf(1/r)}}\ee^{-\frac{\pi}{8}(4i(n-1)+n^2\pi r^2)} \\
	\times\pqty{\erf\qty(\frac{2-in\pi r^2}{2\sqrt{2}r})+\erf\qty(\frac{2+in\pi r^2}{2\sqrt{2}r})},
}
for odd $n$ or otherwise zero. 
$\erf(x)=(2/\sqrt{\pi})\int_0^x\ee^{-t^2}\dd{t}$ is the error function.
The $r$ dependence of their absolute values squared is shown in the left panel of Fig.~\ref{fig: QWell hilltop}.
The lowest state $n=1$ dominates the higher modes, taking its maximal value $\sim0.99$ at $r=r_\umax\sim0.52$ shown by the vertical thin line.
The $n$ dependence at $r=r_\umax$ up to a higher value of $n$ is exhibited in the right panel.
One sees that the Gaussian test function can pick up the first eigenmode with the highest probability.

\begin{figure*}
    \centering
    \begin{tabular}{c}
        \begin{minipage}{0.5\hsize}
            \centering
            \includegraphics[width=0.95\hsize]{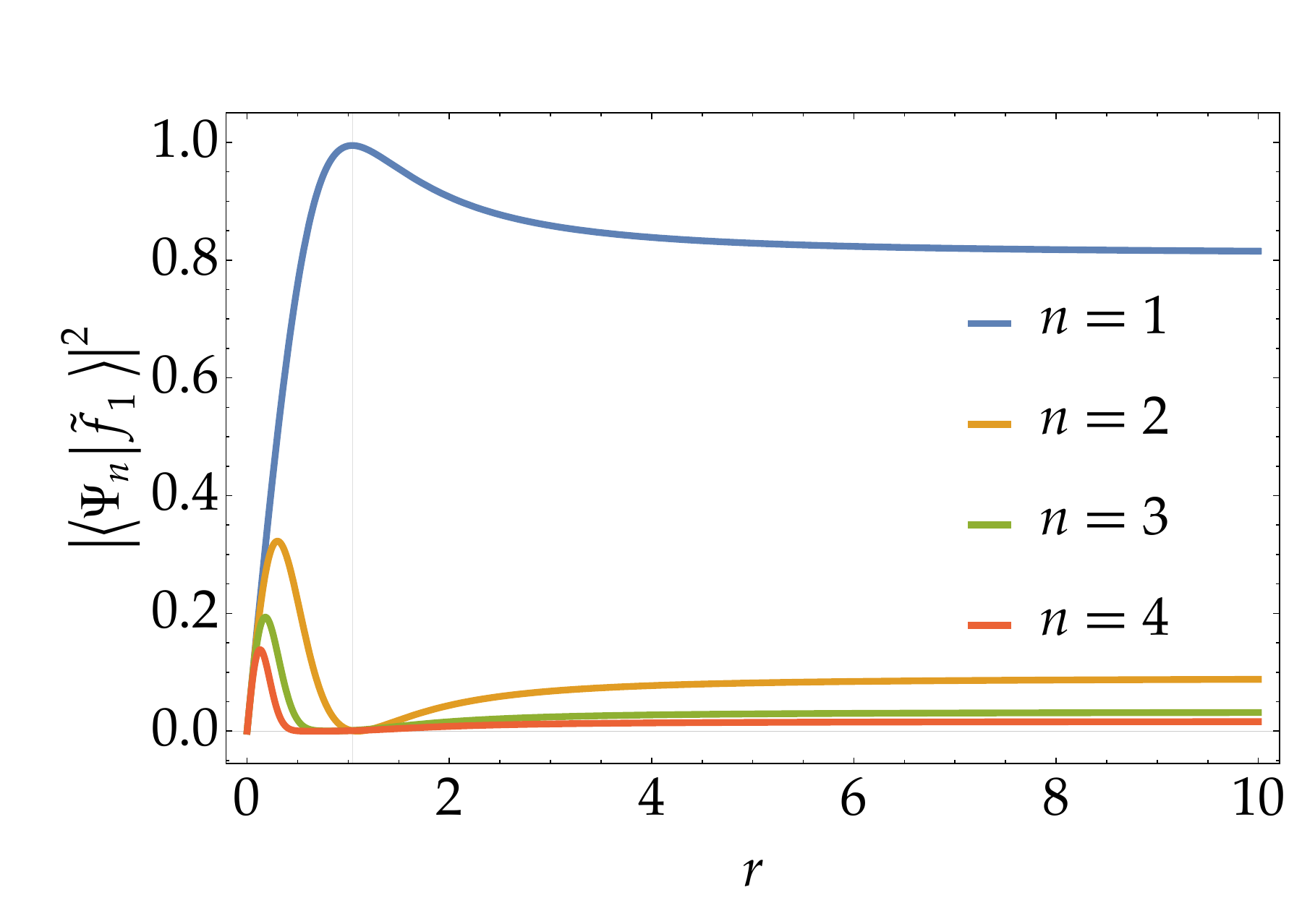}
        \end{minipage}
        \begin{minipage}{0.5\hsize}
            \centering
            \includegraphics[width=0.95\hsize]{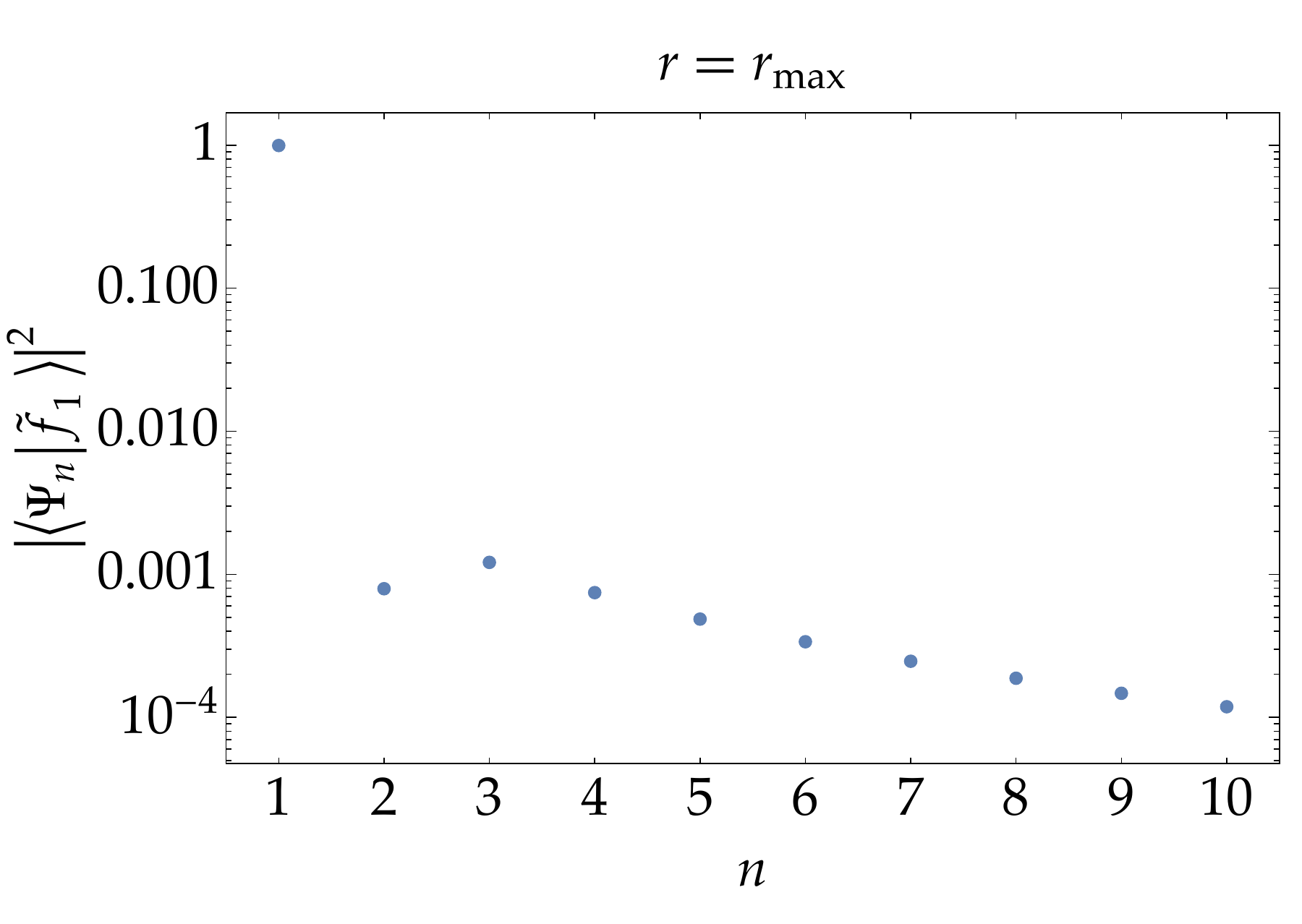}
        \end{minipage}
    \end{tabular}
    \caption{The same plot as Fig.~\ref{fig: QWell hilltop} but for the inner product~\eqref{eq: f1tPsin QWell hill inflec} instead of \eqref{eq: f1tPsin QWell hill}. The vertical thin line in the left panel indicates $r=r_\umax\simeq1.04$ which maximises the inner product for $n=1$.}
    \label{fig: QWell inflection}
\end{figure*}

The flat inflection model (see, e.g., Ref.~\cite{Starobinsky:1992ts,Garcia-Bellido:2017mdw,Ezquiaga:2017fvi,Motohashi:2017kbs}) can also be simulated in the quantum well model just by changing the one boundary condition (we choose $\phi=\phi_\uf$ without loss of generality) from the absorbing to the reflective (i.e., Neumann) one.
The eigenfunctions are then given by
\bae{\label{eq: eigenfunction inflection}
    \Psi_n(\phi)=\frac{1}{\sqrt{\phi_\uf}}\sin\qty[\pqty{n-\frac{1}{2}}\pi\frac{\phi+\phi_\uf}{2\phi_\uf}] \qc
    n=1,2,3,\cdots,
}
with the eigenvalues
\bae{
    \Lambda_n=\frac{(n-1/2)^2\pi^2\Mpl^2v_0}{4\phi_\uf^2}.
}
Its inner product with the Gaussian test function \bae{
    \tilde{f}_1(\phi)\simeq\qty(\frac{\sqrt{\pi}}{2}r\phi_\uf\erf\qty(\frac{2}{r}))^{-1}\exp(-\frac{(\phi-\phi_\uf)^2}{2r^2\phi_\uf^2})
}
reads
\bme{\label{eq: f1tPsin QWell hill inflec}
    \braket{\tilde{f}_1|\Psi_n}=-\frac{(-1)^n\pi^{1/4}\sqrt{r}}{2\sqrt{\erf(2/r)}}\ee^{-\frac{(2n-1)^2\pi^2r^2}{32}} \\
    \times\pqty{\erf\qty(\frac{8+(2n-1)i\pi r^2}{4\sqrt{2}r})+\erf\pqty{\frac{8-(2n-1)i\pi r^2}{4\sqrt{2}r}}}.
}
It is demonstrated in Fig.~\ref{fig: QWell inflection}.
The Gaussian test function can again pick up the lowest mode well.

Note that the reflective boundary is not compatible with our quantum algorithm. 
However, the eigenfunctions~\eqref{eq: eigenfunction inflection} are practically well reproduced by extending the calculation region slightly outside the quantum well, $\phi>\phi_\uf$, with a steep ascent potential and imposing the absorbing boundary condition because $\omega_n^2$ in Eq.~\eqref{eq: omega2} becomes negative for a steep potential and the eigenfunction rapidly damps outside the quantum well.

\subsection{Hybrid inflation}

Let us also see a two-field generalisation called \emph{hybrid inflation}~\cite{Linde:1993cn}.
There, the $\phi$ field rolls down along its potential $V_\phi(\phi)$ with the support potential of the so-called \emph{waterfall} field $\psi$ as
\bae{\label{eq: hybrid V}
    V(\phi,\psi)=V_\phi(\phi)+V_0\bqty{\pqty{1-\pqty{\frac{\psi}{M}}^2}+2\pqty{\frac{\phi\psi}{\phi_\uc M}}^2}.
}
where $V_0$, $M$, and $\phi_\uc$ are the model parameters. Only positive values of $\phi$ are often considered and $V_\phi'(\phi)$ is chosen to be positive so that $\phi$ rolls from a larger value to a smaller value.
$\psi$ is stabilised to $\psi=0$ during $\phi>\phi_\uc$ due to the coupling with $\phi$, while $\psi=0$ is destabilised when $\phi<\phi_\uc$ and $\psi$ rolls down to either potential minima $\psi=\pm M$.
The $\psi$'s fluctuations around $\phi=\phi_\uc$ determine which minima $\psi$ falls to. The whole dynamics is hence stochastic, which makes the model non-trivial.

In the paper, we choose the inflaton potential to have a flat inflection:
\bae{\label{eq: cubic Vphi}
    V_\phi(\phi)=\frac{V_0\beta}{\Mpl^3}(\phi-\phi_\uc)^3,
}
with a parameter $\beta$, which not in hybrid inflation but as a single-field model, have been considered in a previous study \cite{Ezquiaga:2019ftu}.
Then, we numerically find the eigenvalues and eigenfunctions of $\widetilde{\calL_\FP}$, adopting the following parameter values: 
\bege{
    V_0=10^{-15}\Mpl^4 \qc 
    M=10^{16}\,\si{GeV}, \\ \phi_\uc=\sqrt{2}M \qc
    \beta=10^4.
    \label{eq:ParamVal}
}
We take the rectangular absorbing boundary.\footnote{Although the reflective condition is usually set on the high-potential side, whether we set the absorbing or reflective condition has almost no effect on the low eigenvalues and eigenfunctions, because of the steep potential away from the inflection point, as mentioned in Sec.~\ref{sec:qwell}.}
$[\phi_{\rm min}, \phi_{\rm max}]$, the range of $\phi$, is set so that $|\phi-\phi_\uc|/\Mpl \le \beta^{-1/3}$, following Ref.~\cite{Ezquiaga:2019ftu}. 
$\psi_{\rm min}$ and $\psi_{\rm max}$, the endpoints of $\psi$, are set so that at the points, the second slow-roll parameter $\Mpl^2|\partial_\psi^2 v/v|$ for $\psi$ is 1, which means that the inflation ends, with $\phi=\phi_\uc$.
It is important to sufficiently resolve the region near the inflection point, where the stochasticity is significant.
Thus, in $\phi$, we set 1000 equally spaced grid points in $[\phi_{\sto,-},\phi_{\sto,+}]$, and 500 points each in $[\phi_{\rm min},\phi_{\sto,-}]$ and $[\phi_{\sto,+},\phi_{\rm max}]$ so that $\log|\phi-\phi_\uc|$ is equally spaced.
Here, $\phi_{\sto,+}$ is the value of $\phi>\phi_\uc$ at which the stochasticity parameter $\eta_{\sto,\phi}$ for $\phi$ becomes 0.1 with $\psi=0$ and $\phi_{\sto,-}=\phi_\uc-(\phi_{\sto,+}-\phi_\uc)$.
Similarly, in $\psi$, we set 1000 equally spaced grid points in $[\psi_{\sto,-},\psi_{\sto,+}]$, and 500 points each in $[\psi_{\rm min},\psi_{\sto,-}]$ and $[\psi_{\sto,+},\psi_{\rm max}]$ so that $\log|\psi|$ is equally spaced, where $\psi_{\sto,+}>0$ is set so that $\eta_{\sto,\psi}=0.1$ at $(\phi,\psi)=(\phi_{\sto,+},\psi_{\sto,+})$ and $\psi_{\sto,-}=-\psi_{\sto,+}$.
Although such grid points that are not equally spaced in linear scale do not match the setting in our quantum algorithm, it does not matter for our current objective to see the overlap between the test function and the eigenfunctions: it is calculated as
\bae{
    \frac{\left|\sum_{k,l}\tilde{f}_1(\phi^{\rm gr}_k,\psi^{\rm gr}_l)\Psi_n(\phi^{\rm gr}_k,\psi^{\rm gr}_l)\Delta \phi_k \Delta \psi_l\right|^2}{\displaystyle\sum_{k,l}\left|\tilde{f}_1(\phi^{\rm gr}_k,\psi^{\rm gr}_l)\right|^2\Delta \phi_k \Delta \psi_l \times \sum_{k,l}\left|\Psi_n(\phi^{\rm gr}_k,\psi^{\rm gr}_l)\right|^2\Delta \phi_k \Delta \psi_l},
}
where $\phi^{\rm gr}_k$ (resp. $\psi^{\rm gr}_k$) is the $k$-th grid point in $\phi$ (resp. $\psi$) and $\Delta \phi_k=\phi^{\rm gr}_{k+1}-\phi^{\rm gr}_k$ and $\Delta \psi_k=\psi^{\rm gr}_{k+1}-\psi^{\rm gr}_k$.
We will consider extending our quantum algorithm so that nonequidistant grid points can be dealt with in future works.

By using \texttt{eigs} in SciPy~\cite{2020SciPy-NMeth}, we find the first ten eigenvalues and eigenfunctions, which are shown in Fig.~\ref{fig:eigenval} and \ref{fig:eigenfunc} respectively.
Noting that the eigenfunctions have peaks in the high-stochasticity region, we take the following Gaussian test function:
\bae{\label{eq:testFuncHyb}
    \tilde{f}_1(\phi,\psi)\propto\exp\left(-\frac{(\phi-\phi_\uc)^2}{2(\phi_{\sto,+}-\phi_\uc)^2}-\frac{\psi^2}{2\psi_{\sto,+}^2}\right)
}
The overlaps between this and the eigenfunctions are shown in Fig.~\ref{fig:overlapHyb}.
One sees that the overlap is significant for the first eigenfunction, which implies that with this test function, our quantum algorithm will also work well in this model.

\begin{figure}
	\centering
        \includegraphics[width=1\hsize]{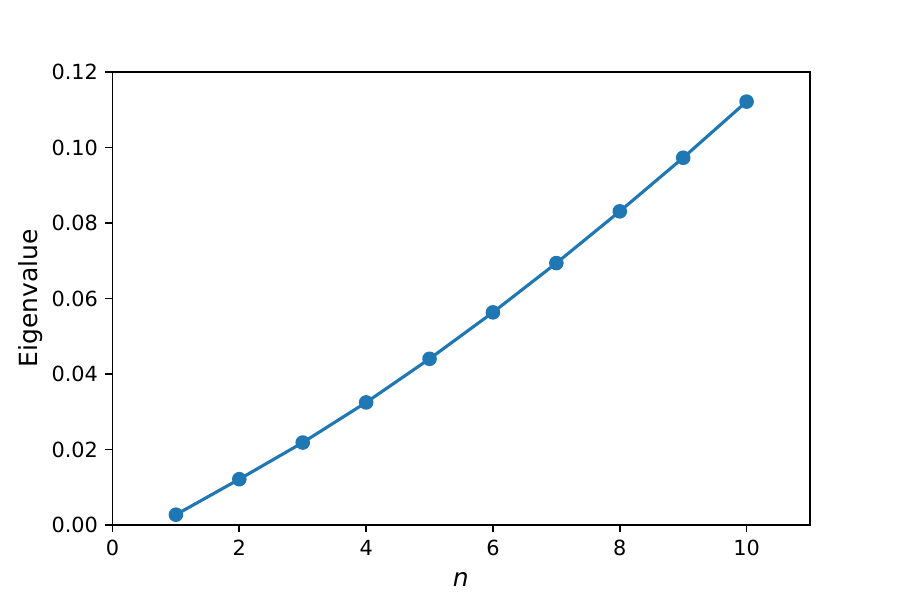}
	\caption{The first 
    ten eigenvalues of $\widetilde{\calL_\FP}$ with $V$ in Eqs.~\eqref{eq: hybrid V} and \eqref{eq: cubic Vphi} for the parameters in Eq.~\eqref{eq:ParamVal}.}
	\label{fig:eigenval}
\end{figure}

\begin{figure*}
	\centering
	\subfigure[$n=1$]{
        \includegraphics[width=1\columnwidth]{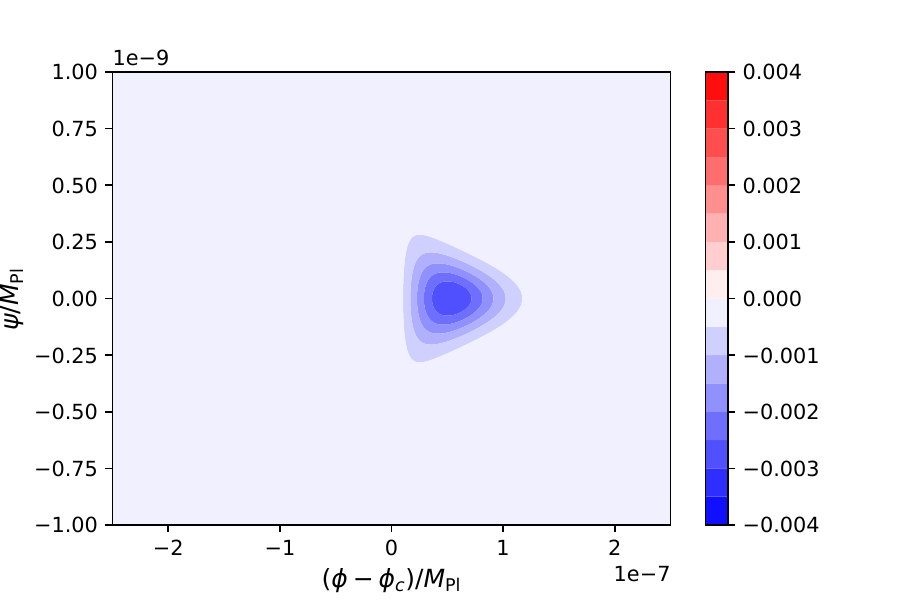}
    }
    \subfigure[$n=2$]{
        \includegraphics[width=1\columnwidth]{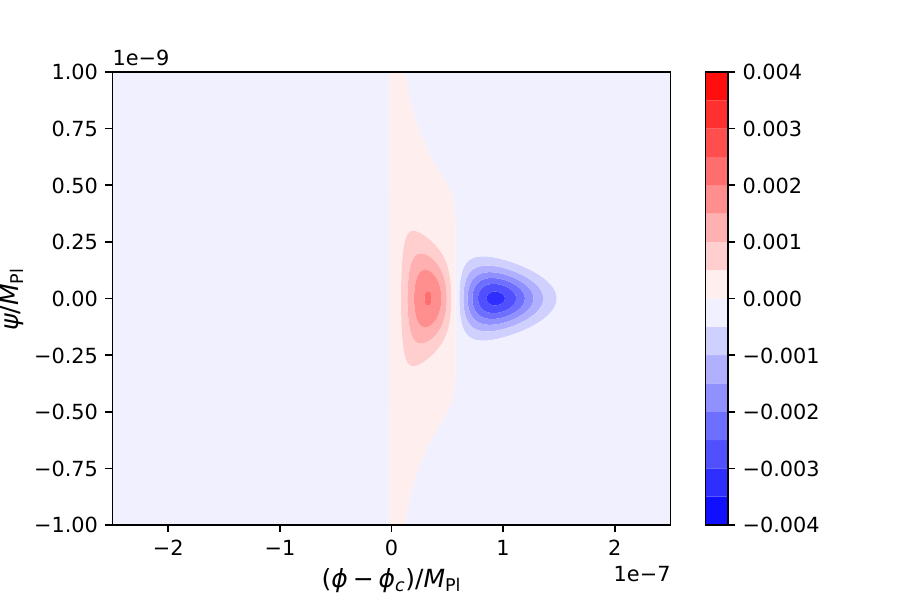}
	}
    \subfigure[$n=3$]{
			\includegraphics[width=1\columnwidth]{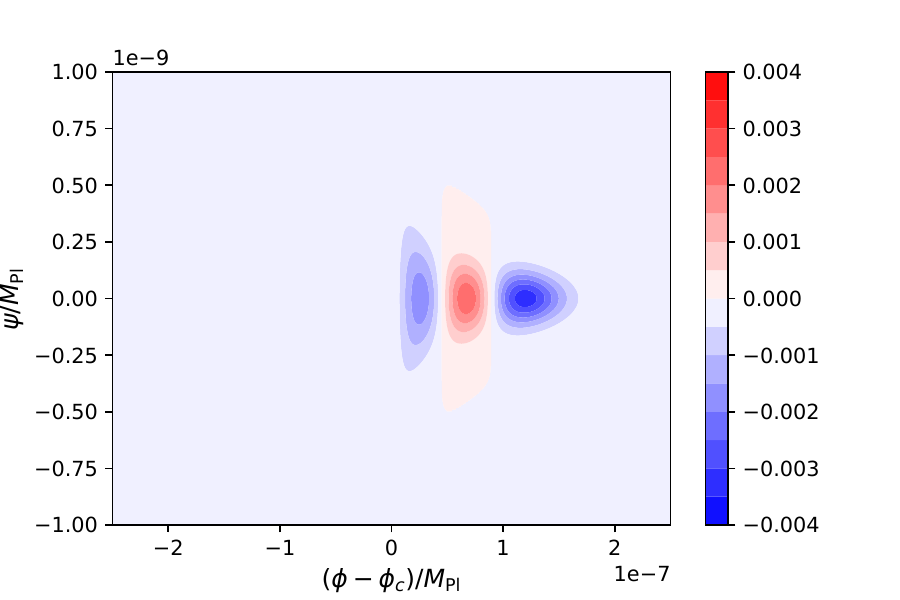}
	}
	\subfigure[$n=4$]{
			\includegraphics[width=1\columnwidth]{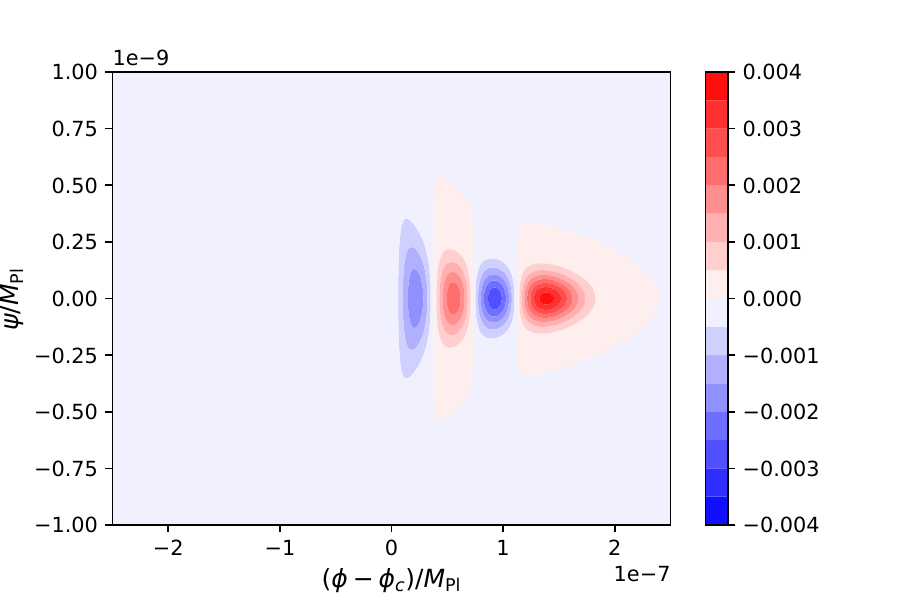}
	}
	\caption{The 
    first four eigenfunctions of the same $\widetilde{\calL_\FP}$ as Fig.~\ref{fig:eigenval}.}
	\label{fig:eigenfunc}
\end{figure*}

\begin{figure}
	\centering
        \includegraphics[width=1\hsize]{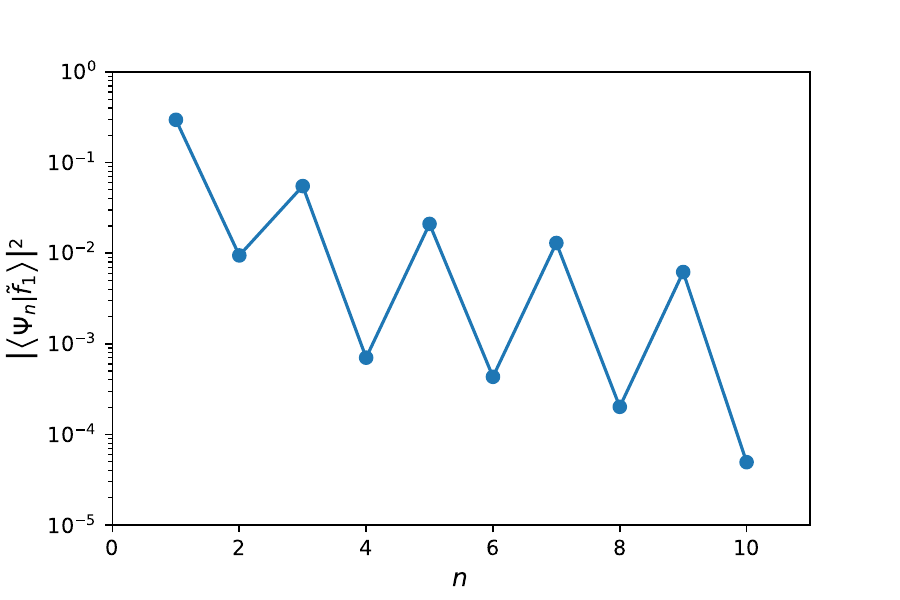}
	\caption{The squared inner products between the quantum states encoding the Gaussian test function $\tilde{f}_1$ in Eq. \eqref{eq:testFuncHyb} and those encoding $n$th eigenfunctions of the same $\widetilde{\calL_\FP}$ as Fig.~\ref{fig:eigenval}.}
	\label{fig:overlapHyb}
\end{figure}

\section{Summary \label{sec:sum}}

In this paper, we considered a quantum algorithm for calculating the first eigenvalue of differential operators.
Under the finite difference approximation of a given operator, this problem boils down to the matrix eigenvalue problem, but in multi-dimensional cases, it is computationally demanding because of the exponential increase of the size of the approximating matrix.
Then, we proposed a quantum algorithm for this task, leveraging the \ac{QSVT}-based quantum algorithm for finding the first eigenvalue of matrices in Ref.~\cite{Lin2020nearoptimalground}.
Our quantum algorithm has the query complexity scaling as $\widetilde{O}(1/\epsilon^2)$ on the accuracy $\epsilon$ in the eigenvalue, which shows the improvement compared to the existing quantum algorithm for the same task with complexity of order $\widetilde{O}(1/\epsilon^3)$.

As a potential application target for our algorithm, we considered a problem in cosmology, finding the eigenvalue of the adjoint \ac{FP} operator $\calL_\FP^\dagger$ in stochastic inflation, which is related to the tail shape of the \ac{PDF} of the primordial perturbation.
Although we cannot run our algorithm for concrete instances of this problem now because of the absence of large-scale fault-tolerant quantum computers, we conducted numerical demonstrations for some problem instances in shochastic inflation, to show that our method is promising.
We take Gaussian functions as test functions, expecting the first eigenfunctions to have a single-bump shape, and saw that they in fact overlap with the first eigenfunctions well in the considered problems, which implies that one of the conditions for our method to work would be satisfied.

Nevertheless, refining the way to set the test function should be considered in future works, since it is not obvious that the current approach will also work in more complicated problems, such as higher-dimensional ones.
Preparing the test function is an issue also in ground energy estimation in condensed matter physics and chemistry, where various approaches have been proposed: e.g., adiabatic state preparation considered in \cite{yoshioka2024hunting}.
It will be interesting and significant to explore applications of such advanced state preparation methods to differential operator eigenproblems, learning from existing studies in other fields.

\section*{Acknowledgements}

KM is supported by MEXT Quantum Leap Flagship Program (MEXT Q-LEAP) Grant no. JPMXS0120319794, JSPS KAKENHI Grant no. JP22K11924, and JST COI-NEXT Program Grant No. JPMJPF2014.
YT is supported by JSPS KAKENHI Grant
No.~JP24K07047.

\appendix

\section{Proofs}

\subsection{Proof of Theorem \ref{th:EigConv} \label{sec:ProofEigConv}}

Before the proof, we introduce some kinds of norms and smoothness classes of functions.
Let $\Omega\subset\mathbb{R}^d$ be a bounded region.
For a function $g:\Omega\rightarrow\mathbb{R}$, we define 
\begin{equation}
|g|_{0,\Omega}
\coloneqq\sup_{\mathbf{x}\in\Omega}|g(\mathbf{x})|    
\end{equation}
and
\begin{align}
    |g|_{a,\Omega}&
    \coloneqq\sum_{|\boldsymbol{\alpha}| \le \lceil a \rceil -1}  |D^{\boldsymbol{\alpha}} g(\mathbf{x})|_{0,\Omega} \nonumber \\
    &+\sum_{|\boldsymbol{\alpha}|=\lceil a \rceil -1} \sup_{\mathbf{x},\mathbf{y}\in\Omega} \frac{|D^{\boldsymbol{\alpha}}g(\mathbf{x})-D^{\boldsymbol{\alpha}}g(\mathbf{y})|}{\|\mathbf{x}-\mathbf{y}\|^{a-\lceil a \rceil +1}}
\end{align}
with $a\in\mathbb{R}_+$.
We say that $g \in C^a\left(\overline{\Omega}\right)$ if $|g|_{a,\Omega}<\infty$, and that $g \in C^a\left(\Omega\right)$ if $g \in C^a\left(\overline{O}\right)$ for every $\overline{O}\subset\Omega$.
Obviously, any function in $C^a\left(\overline{\Omega}\right)$ needs to be $(\lceil a \rceil-1)$-times differentiable.
We can easily see that any $\lceil a \rceil$-times continuously differentiable function on $\overline{\Omega}$ is in $C^a\left(\overline{\Omega}\right) \cap C^a\left(\Omega\right)$.

Then, the proof of Theorem \ref{th:EigConv} is as follows.

\begin{proof}[Proof of Theorem \ref{th:EigConv}]
    Theorem \ref{th:EigConv} is obtained by just applying Theorem 5.1 in \cite{Kuttler1970} to Problem \ref{prob:eigen} with the finite difference approximation in Eq. \eqref{eq:FDMat}.
    Now, we just check that the conditions for applying Theorem 6.4 in \cite{Kuttler1970} are satisfied.

    \begin{itemize}
        \item $L_{n_\mathrm{gr}}$ acts locally, that is, any entry $(L_{n_\mathrm{gr}})_{\mathbf{j}_1,\mathbf{j}_2}$ corresponding to the grid points $\mathbf{x}^\mathrm{gr}_{\mathbf{j}_1}$ and $\mathbf{x}^\mathrm{gr}_{\mathbf{j}_2}$ such that $\|\mathbf{x}^\mathrm{gr}_{\mathbf{j}_1}-\mathbf{x}^\mathrm{gr}_{\mathbf{j}_2}\| > h$ is 0.

        \item $L_{n_\mathrm{gr}}$ is symmetric.
    
        \item Condition (i) (see \cite{Kuttler1970} for details; the same applies below)

        In the grid point set $\mathcal{D}_{n_\mathrm{gr}}$ given in Sec. \ref{sec:FDAppEigen}, for any adjacent point pair $(\mathbf{x}^\mathrm{gr}_{\mathbf{j}_1}, \mathbf{x}^\mathrm{gr}_{\mathbf{j}_2})$, where $\mathbf{j}_1=\mathbf{j}_2 \pm \mathbf{e}_i$ for some $i\in[d]$, the corresponding entry $(L_{n_\mathrm{gr}})_{\mathbf{j}_1,\mathbf{j}_2}$ in $L_{n_\mathrm{gr}}$ is nonzero.
        Besides, for any $\mathbf{x}^\mathrm{gr}_{\mathbf{j}}, \mathbf{x}^\mathrm{gr}_{\mathbf{j}^\prime}\in\mathcal{D}_{n_\mathrm{gr}}$, we can move from $\mathbf{x}^\mathrm{gr}_{\mathbf{j}}$ to $\mathbf{x}^\mathrm{gr}_{\mathbf{j}^\prime}$ by some sequence of transitions to an adjacent point.
        Combining these observations, we see that the condition (i) is satisfied.

        \item Condition (ii)

        We have
        \begin{align}
            &\sum_{\substack{\mathbf{j}^\prime\in\mathcal{D}_{n_\mathrm{gr}} \\ \mathbf{j}^\prime \ne \mathbf{j}}} \left|h^2 (L_{n_\mathrm{gr}})_{\mathbf{j},\mathbf{j}^\prime}\right| \le 2d a_{\mathrm{max}}, \nonumber \\
            &\frac{1}{h^2 \left| (L_{n_\mathrm{gr}})_{\mathbf{j},\mathbf{j}} \right|} \le \frac{1}{2d a_{\mathrm{min}}}, \nonumber \\
            &\sum_{\substack{\mathbf{j}^\prime\in\mathcal{D}_{n_\mathrm{gr}} \\ \mathbf{j}^\prime \ne \mathbf{j}}} \left|\frac{ (L_{n_\mathrm{gr}})_{\mathbf{j},\mathbf{j}^\prime}}{(L_{n_\mathrm{gr}})_{\mathbf{j},\mathbf{j}}}\right| \le 1
        \end{align}
        for any $\mathbf{j}\in[n_\mathrm{gr}]_0^d$, where
        $a_{\mathrm{min}}
        \coloneqq\min_{\substack{i\in[d] \\ \mathbf{x}\in\overline{\mathcal{D}}}} a_i(\mathbf{x})$.
        This means that the condition (ii) is satisfied\footnote{Note that, in our definition, $L_{n_\mathrm{gr}}$ has no entry corresponding to a point in $\partial \mathcal{D}$. Even if we make it have such entries as in \cite{Kuttler1970}, they are zero because of the Dirichlet boundary condition and thus have nothing to do with the condition (ii).}.

        \item Condition (iii)

        It is obvious that $\sum_{\substack{\mathbf{j}^\prime\in\mathcal{D}_{n_\mathrm{gr}} \\ \mathbf{j}^\prime \ne \mathbf{j}}} \left| (L_{n_\mathrm{gr}})_{\mathbf{j},\mathbf{j}^\prime} \right| \le (L_{n_\mathrm{gr}})_{\mathbf{j},\mathbf{j}}$ for any $\mathbf{j}
        \coloneqq[n_\mathrm{gr}]_0^d$, which means that the first part of this condition is satisfied.
        The second part is irrelevant in the current case, since now $\mathcal{D}_{n_\mathrm{gr}}^*
        \coloneqq\mathcal{D}_{n_\mathrm{gr}}-\mathcal{D}_{n_\mathrm{gr}}^\prime$ is empty\footnote{Note the differences of the notations in this paper and \cite{Kuttler1970}. $\mathcal{D}_{n_\mathrm{gr}}^\prime$ and $\mathcal{D}_{n_\mathrm{gr}}^*$ correspond to $\Omega_h^\prime$ and $\Omega_h^*$ in \cite{Kuttler1970}, respectively.}.
        Here, $\mathcal{D}_{n_\mathrm{gr}}^\prime
        \coloneqq\{\mathbf{x} \in \mathcal{D}_{n_\mathrm{gr}} \ | \ \inf_{\mathbf{x}^\prime \in \partial\mathcal{D}} \|\mathbf{x}-\mathbf{x}^\prime\| \ge h \}$, which is now equal to $\mathcal{D}_{n_\mathrm{gr}}$.

        \item Condition (iv)

        This is also irrelevant in the current case since $\mathcal{D}_{n_\mathrm{gr}}^*=\emptyset$.

        \item Condition (v)

        It is obvious that $(L_{n_\mathrm{gr}})_{\mathbf{j},\mathbf{j}^\prime}\le0$ for any $\mathbf{j},\mathbf{j}^\prime\in[n_\mathrm{gr}]_0^d$ such that $\mathbf{j}\ne\mathbf{j}^\prime$, which means this condition is satisfied.

        \item Let us check that, in the words of \cite{Kuttler1970}, $L_{n_\mathrm{gr}}$ is consistent of order 2 with $\mathcal{L}$ in $\mathcal{D}_{n_\mathrm{gr}}$, that is, there exists a constant $C$ independent of $h$ such that
        \begin{equation}
            \left|\mathcal{L}f\left(\mathbf{x}^\mathrm{gr}_{\mathbf{j}}\right)-\left(L_{n_\mathrm{gr}}\mathbf{v}_{f,n_\mathrm{gr}}\right)_{\mathbf{j}}\right|\le C h^\mu |f|_{\mu+2,S_{h,\mathbf{j}}\cap\mathcal{D}}
            \label{eq:consis}
        \end{equation}
        holds for any $\mu\in(0,2]$,  $\mathbf{j}\in[n_\mathrm{gr}]_0^d$ and $f \in C^{\mu+2}\left(\overline{S_{h,\mathbf{j}}\cap\mathcal{D}}\right)$.
        Here, $S_{h,\mathbf{j}}
        \coloneqq\{\mathbf{x}\in\mathbb{R}^d \ | \ \|\mathbf{x}-\mathbf{x}^\mathrm{gr}_{\mathbf{j}}\|<h\}$ is the sphere of radius $h$ centered at $\mathbf{x}^\mathrm{gr}_{\mathbf{j}}$.

        $f \in C^{\mu+2}\left(\overline{S_{h,\mathbf{j}}\cap\mathcal{D}}\right)$ implies the following.
        First, $f$ is $\tilde{\mu}$-times differentiable, where $\tilde{\mu}=\lceil \mu+2 \rceil-1$, and thus, as we can see using Taylor's theorem,
        \begin{equation}
            f(\mathbf{x}^\mathrm{gr}_\mathbf{j}+\delta\mathbf{e}_i) = \sum_{n=0}^{\tilde{\mu}-1} \frac{1}{n!}\frac{\partial^n f}{\partial x_i^n}(\mathbf{x}^\mathrm{gr}_\mathbf{j}) \delta^n+ \frac{1}{\tilde{\mu}!}\frac{\partial^{\tilde{\mu}} f}{\partial x_i^{\tilde{\mu}}}(\mathbf{x}^\mathrm{gr}_\mathbf{j}+\delta^\prime\mathbf{e}_i) \delta^{\tilde{\mu}}
            \label{eq:Taylor}
        \end{equation}
        holds for any $\mathbf{j}\in[n_\mathrm{gr}]_0^d$, $i\in[d]$, and $\delta\in[-h,h]$, with some real number $\delta^\prime$ that is between 0 and $\delta$ and dependent on $i,\mathbf{x}^\mathrm{gr}_\mathbf{j}$ and $\delta$.
        Second, for any $\mathbf{x},\mathbf{y}\in S_{h,\mathbf{j}}\cap\mathcal{D}$ and $i\in[d]$,
        \begin{equation}
            \left|\frac{\partial^{\tilde{\mu}} f}{\partial x_i^{\tilde{\mu}}}(\mathbf{x})-\frac{\partial^{\tilde{\mu}} f}{\partial x_i^{\tilde{\mu}}}(\mathbf{y})\right| \le |f|_{\mu+2,S_{h,\mathbf{j}}\cap\mathcal{D}} \|\mathbf{x}-\mathbf{y}\|^{\mu+2 - \lceil \mu+2 \rceil +1}
            \label{eq:fprprprBnd}
        \end{equation}
        holds.
        Combining Eqs. \eqref{eq:Taylor} and \eqref{eq:fprprprBnd}, we have
        \begin{equation}
            f(\mathbf{x}^\mathrm{gr}_\mathbf{j}+\delta\mathbf{e}_i) = \sum_{n=0}^{\tilde{\mu}} \frac{1}{n!}\frac{\partial^n f}{\partial x_i^n}(\mathbf{x}^\mathrm{gr}_\mathbf{j}) \delta^n +R,
            \label{eq:TaylorMod}
        \end{equation}
        with the residual term $R$ bounded as
        \begin{equation}
            |R|\le \frac{1}{\tilde{\mu}!} |f|_{\mu+2,S_{h,\mathbf{j}}\cap\mathcal{D}} |\delta|^{\mu+2}.
        \end{equation} 
        For $a_i$, since it is four-times continuously differentiable on $\mathcal{D}$ and $\tilde{\mu}\le3$, Taylor's theorem implies that, for any $n\in[\tilde{\mu}]$,
        \begin{align}
        a_i(\mathbf{x}^\mathrm{gr}_\mathbf{j}+\delta\mathbf{e}_i) &= \sum_{m=0}^{n}\frac{1}{m!}\frac{\partial^m a_i}{\partial x_i^m}(\mathbf{x}^\mathrm{gr}_\mathbf{j}) \delta^m \nonumber \\
        & +\frac{1}{(n+1)!}\frac{\partial^{n+1} a_i}{\partial x_i^{n+1}}(\mathbf{x}^\mathrm{gr}_\mathbf{j}  +\delta^{\prime\prime}\mathbf{e}_i) \delta^{n+1}
            \label{eq:Taylora}
        \end{align}
        holds with some real number $\delta^{\prime\prime}$ that is between $0$ and $\delta$ and dependent on $i$, $n$, $\mathbf{x}^\mathrm{gr}_\mathbf{j}$ and $\delta$.
        By using Eqs. \eqref{eq:TaylorMod} and \eqref{eq:Taylora} with $\delta=\pm h$ and $\delta=\pm\frac{h}{2}$ in Eq. \eqref{eq:FDMatExp}, we obtain
        \begin{align}
            & \left(L_{n_\mathrm{gr}}\mathbf{v}_{f,n_\mathrm{gr}}\right)_{\mathbf{j}}= \nonumber \\
            & \ \sum_{i=1}^d\left(\frac{\partial a_i}{\partial x_i}(\mathbf{x}^\mathrm{gr}_\mathbf{j})\frac{\partial f}{\partial x_i}(\mathbf{x}^\mathrm{gr}_\mathbf{j})+a_i(\mathbf{x}^\mathrm{gr}_\mathbf{j})\frac{\partial^2 f}{\partial x_i^2}(\mathbf{x}^\mathrm{gr}_\mathbf{j})\right)   \nonumber \\
            & \ +a_0(\mathbf{x}^\mathrm{gr}_\mathbf{j})f(\mathbf{x}^\mathrm{gr}_\mathbf{j}) + R^\prime
        \end{align}
        Here, the residual term $R^\prime$ is bounded as
        \begin{align}
            & |R^\prime| \le \nonumber \\
            & \ \sum_{i=1}^d\sum_{n=0}^{\tilde{\mu}} \frac{1}{2^{\tilde{\mu}-n}(\tilde{\mu}-n+1)!n!}a^{(\tilde{\mu}-n+1)}_\mathrm{max}\left|\frac{\partial^n f}{\partial x_i^n}\right|_{0,S_{h,\mathbf{j}}\cap\mathcal{D}}h^{\tilde{\mu}-1} \nonumber \\
            & \ + \frac{2d}{\tilde{\mu}!} |f|_{\mu+2,S_{h,\mathbf{j}}\cap\mathcal{D}} h^{\mu},
        \end{align}
        where, for $n\in\mathbb{N}$, $a^{(n)}_\mathrm{max}
        \coloneqq\max_{i\in[d]} \left|\frac{\partial^n a_i}{\partial x_i^n}\right|_{0,\mathcal{D}}$.
        This implies that Eq. \eqref{eq:consis} holds with $C$
        \begin{equation}
            C
            \coloneqq\max_{n\in[\tilde{\mu}+1]_0} \frac{a^{(\tilde{\mu}-n+1)}_\mathrm{max}}{2^{\tilde{\mu}-n}(\tilde{\mu}-n+1)!n!}\times (U-L)^{\lceil \mu+2 \rceil - \mu-2} + \frac{2d}{\tilde{\mu}!}.
        \end{equation}

        \item Lastly, let us check that the each eigenfunction $f_k$ of $\mathcal{L}$ is in $C^4(\overline{\mathcal{D}})$.
        Since $a_0,
        \cdots,a_d$ are now four-times continuously differentiable and thus twice continuously differentiable, they are in $C^2(\overline{\mathcal{D}}) \cap C^2(\mathcal{D})$.
        Then, Theorem 6.3 in \cite{Kuttler1970} implies that $f_k$ is in $C^4(\overline{\mathcal{D}}) \cap C^4(\mathcal{D})$.
    \end{itemize}
\end{proof}

\subsection{Proof of Corollary \ref{co:GEESparse} \label{sec:ProofGEESparse}}

\begin{proof}[Proof of Corollary \ref{co:GEESparse}]
    
    Note that, given $(\alpha,a,0)$-block-encoding $U_H$ of $H$, the quantum algorithm in \cite{Lin2020nearoptimalground} relies on a unitary $\mathrm{PROJ}\left(\mu,\frac{\epsilon}{2\alpha},\epsilon^\prime\right)$ constructed with $U_H$, where $\mu\in\mathbb{R}$ is any real number, $\epsilon^\prime$ is set to $\epsilon^\prime=\gamma/2$, and
    \begin{align}
        &\left\|(\bra{0}^{\otimes(a+3)}\otimes I_{2^n})\mathrm{PROJ}\left(\mu,\frac{\epsilon}{2\alpha},\epsilon^\prime\right)\ket{0}^{\otimes(a+3)}\ket{\phi_1}\right\| \nonumber \\
        & \quad 
        \begin{cases}
            \ge \gamma - \frac{\epsilon^\prime}{2} & ; \ \mathrm{if} \ \lambda_1 \le \mu - \epsilon \\
            \le \frac{\epsilon^\prime}{2} & ; \ \mathrm{if} \ \lambda_1 \ge \mu + \epsilon
        \end{cases}
        \label{eq:PROJ}
    \end{align}
    holds.
    $U_H$ is used only in this unitary, and thus, if we can construct a unitary $\widetilde{\mathrm{PROJ}}\left(\mu,\frac{\epsilon}{2\alpha},\epsilon^\prime\right)$ that has the same property as Eq. \eqref{eq:PROJ} using $O^H_\mathrm{row}$, $O^H_\mathrm{col}$ and $O^H_\mathrm{ent}$ instead of $U_H$, we can run the quantum algorithm replacing $\mathrm{PROJ}\left(\mu,\frac{\epsilon}{2\alpha},\epsilon^\prime\right)$ with $\widetilde{\mathrm{PROJ}}\left(\mu,\frac{\epsilon}{2\alpha},\epsilon^\prime\right)$.
    In particular, it suffices that we have $\widetilde{\mathrm{PROJ}}\left(\mu,\frac{\epsilon}{2\alpha},\epsilon^\prime\right)$ such that
    \begin{align}
        &\left\|(\bra{0}^{\otimes(a+3)}\otimes I_{2^n})\widetilde{\mathrm{PROJ}}\left(\mu,\frac{\epsilon}{2\alpha},\epsilon^\prime\right)(\ket{0}^{\otimes(a+3)}\otimes I_{2^n}) - \right. \nonumber \\
        &\quad \left. (\bra{0}^{\otimes(a+3)}\otimes I_{2^n})\mathrm{PROJ}\left(\mu,\frac{\epsilon}{2\alpha},\frac{\epsilon^\prime}{2}\right)(\ket{0}^{\otimes(a+3)}\otimes I_{2^n})\right\| \nonumber \\
        &\le \frac{\epsilon^\prime}{4}, \label{eq:PROJtil}
    \end{align}
    which, as we can see by simple algebra, leads to $\widetilde{\mathrm{PROJ}}\left(\mu,\frac{\epsilon}{2\alpha},\epsilon^\prime\right)$ satisfying the property like Eq. \eqref{eq:PROJ}.    

    Then, let us consider how to construct such $\widetilde{\mathrm{PROJ}}\left(\mu,\frac{\epsilon}{2\alpha},\epsilon^\prime\right)$, fixing $\alpha$ to $\tilde{\alpha}
    \coloneqq s\|H\|_\mathrm{max}$ and $a$ to $\tilde{a}
    \coloneqq n+3$.
    On the one hand, we note that
    \begin{align}
        & \mathrm{PROJ}\left(\mu,\frac{\epsilon}{2\tilde{\alpha}},\frac{\epsilon^\prime}{2}\right)= \nonumber \\
        \quad & (\mathrm{Had} \otimes I_{2^{n+\tilde{a}+3}}) \otimes \nonumber \\
        \quad & \left(\ket{0}\bra{0} \otimes I_{2^{n+\tilde{a}+3}} + \ket{1}\bra{1} \otimes \mathrm{REF}\left(\mu,\frac{\epsilon}{2\tilde{\alpha}},\frac{\epsilon^\prime}{2}\right)\right) \otimes\nonumber \\
        \quad & (\mathrm{Had} \otimes I_{2^{n+\tilde{a}+3}})
        \label{eq:PROJDef}
    \end{align}
    by the definition in \cite{Lin2020nearoptimalground}.
    Here, $\mathrm{Had}$ is a Hadamard gate, and $\mathrm{REF}\left(\mu,\frac{\epsilon}{2\tilde{\alpha}},\frac{\epsilon^\prime}{2}\right)$ is a $(\tilde{\alpha},\tilde{a}+2,0)$-block-encoding of $-S\left(\frac{H-\mu I}{\tilde{\alpha}+|\mu|};\frac{\epsilon}{2\tilde{\alpha}},\frac{\epsilon^\prime}{2}\right)$ with $S\left(\cdot;\frac{\epsilon}{2\tilde{\alpha}},\frac{\epsilon^\prime}{2}\right)$ being some polynomial of degree $d_{\tilde{\alpha},\epsilon,\epsilon^\prime}=O\left(\frac{\tilde{\alpha}}{\epsilon}\log\left(\frac{1}{\epsilon^\prime}\right)\right)$ (see \cite{Lin2020nearoptimalground} for the details).
    On the other hand, because of Theorem \ref{th:BlEncSp}, we can construct a $(\tilde{\alpha},\tilde{a},\tilde{\epsilon})$-block-encoding $\tilde{U}_H$ of $H$ with $O^H_\mathrm{row}$, $O^H_\mathrm{col}$ and $O^H_\mathrm{ent}$, where 
    \begin{equation}
        \tilde{\epsilon}
        \coloneqq\left(\frac{\epsilon^\prime}{8d_{\tilde{\alpha},\epsilon,\epsilon^\prime}}\right)^2\tilde{\alpha}.
    \end{equation}
    $\tilde{U}_H$ can be regarded as a $(\tilde{\alpha},\tilde{a},0)$-block-encoding of some matrix $\tilde{H}\in\mathbb{C}^{2^n \times 2^n}$ such that $\|\tilde{H}-H\|\le\tilde{\epsilon}$.
    Thus, because of Lemma 22 in the full version of \cite{Gilyen2019}, replacing $U_H$ in $\mathrm{REF}\left(\mu,\frac{\epsilon}{2\tilde{\alpha}},\frac{\epsilon^\prime}{2}\right)$ with $\tilde{U}_H$ yields a $(\tilde{\alpha},\tilde{a}+2,0)$-block-encoding $\widetilde{\mathrm{REF}}$ of $\tilde{S}\in\mathbb{C}^{2^n \times 2^n}$ such that
    \begin{align}
        &\left\|\tilde{S}-\left(-S\left(\frac{H-\mu I}{\tilde{\alpha}+|\mu|};\frac{\epsilon}{2\tilde{\alpha}},\frac{\epsilon^\prime}{2}\right)\right)\right\| \nonumber \\
        \le & 4 d_{\tilde{\alpha},\epsilon,\epsilon^\prime} \sqrt{\left\|\frac{H-\mu I}{\tilde{\alpha}+|\mu|}-\frac{\tilde{H}-\mu I}{\tilde{\alpha}+|\mu|}\right\|} \nonumber \\
        \le &  \frac{4d_{\tilde{\alpha},\epsilon,\epsilon^\prime}}{\sqrt{\tilde{\alpha}}}\sqrt{\left\|H-\tilde{H}\right\|} \nonumber \\
        \le & \frac{\epsilon^\prime}{2}.
    \end{align}
    Consequently, replacing $U_H$ in $\mathrm{PROJ}\left(\mu,\frac{\epsilon}{2\tilde{\alpha}},\frac{\epsilon^\prime}{2}\right)$ with $\tilde{U}_H$ yields $\widetilde{\mathrm{PROJ}}\left(\mu,\frac{\epsilon}{2\tilde{\alpha}},\frac{\epsilon^\prime}{2}\right)$ satisfying
    \begin{widetext}
    \begin{align}
        &\left\|(\bra{0}^{\otimes(\tilde{a}+3)}\otimes I_{2^n})\widetilde{\mathrm{PROJ}}\left(\mu,\frac{\epsilon}{2\alpha},\epsilon^\prime\right)(\ket{0}^{\otimes(\tilde{a}+3)}\otimes I_{2^n}) -  (\bra{0}^{\otimes(\tilde{a}+3)}\otimes I_{2^n})\mathrm{PROJ}\left(\mu,\frac{\epsilon}{2\alpha},\frac{\epsilon^\prime}{2}\right)(\ket{0}^{\otimes(\tilde{a}+3)}\otimes I_{2^n})\right\| \nonumber \\
        = & \left\|(\bra{+}\bra{0}^{\tilde{a}+2}\otimes I_{2^n})\left(\ket{1}\bra{1}\otimes\left(\widetilde{\mathrm{REF}}-\mathrm{REF}\left(\mu,\frac{\epsilon}{2\tilde{\alpha}},\frac{\epsilon^\prime}{2}\right)\right)\right)(\ket{+}\ket{0}^{\otimes(\tilde{a}+2)}\otimes I_{2^n})\right\| \nonumber \\
        =&\frac{1}{2} \left\|(\bra{0}^{\tilde{a}+2}\otimes I_{2^n})\left(\widetilde{\mathrm{REF}}-\mathrm{REF}\left(\mu,\frac{\epsilon}{2\tilde{\alpha}},\frac{\epsilon^\prime}{2}\right)\right)(\ket{0}^{\otimes(\tilde{a}+2)}\otimes I_{2^n})\right\| \nonumber \\
        =&\frac{1}{2}\left\|\tilde{S}-\left(-S\left(\frac{H-\mu I}{\tilde{\alpha}+|\mu|};\frac{\epsilon}{2\tilde{\alpha}},\frac{\epsilon^\prime}{2}\right)\right)\right\| \le  \frac{\epsilon^\prime}{4},
    \end{align}
    \end{widetext}
    namely, Eq. \eqref{eq:PROJtil}.
    Here, at the first equality, we have used Eq. \eqref{eq:PROJDef} and the similar relationship between $\widetilde{\mathrm{PROJ}}\left(\mu,\frac{\epsilon}{2\alpha},\frac{\epsilon^\prime}{2}\right)$ and $\widetilde{\mathrm{REF}}$.

    Lastly, let us evaluate the number of uses of $O^H_\mathrm{row}$, $O^H_\mathrm{col}$, $O^H_\mathrm{ent}$ and elementary gates and the number of qubits in the quantum algorithm.
    As said above, we replace $\mathrm{PROJ}\left(\mu,\frac{\epsilon}{2\tilde{\alpha}},\epsilon^\prime\right)$ in the original algorithm in \cite{Lin2020nearoptimalground} with $\widetilde{\mathrm{PROJ}}\left(\mu,\frac{\epsilon}{2\tilde{\alpha}},\epsilon^\prime\right)$, which is yielded by replacing $U_H$ in $\mathrm{PROJ}\left(\mu,\frac{\epsilon}{2\tilde{\alpha}},\frac{\epsilon^\prime}{2}\right)$ with $\tilde{U}_H$.
    If we use $U_H$, the number of calls to it, the number of uses of other 1- and 2-qubit gates, and the number of qubits in $\mathrm{PROJ}\left(\mu,\frac{\epsilon}{2\tilde{\alpha}},\epsilon^\prime\right)$ and $\mathrm{PROJ}\left(\mu,\frac{\epsilon}{2\tilde{\alpha}},\frac{\epsilon^\prime}{2}\right)$ are of the same order.
    Since we use $O^H_\mathrm{row}$, $O^H_\mathrm{col}$ and $O^H_\mathrm{ent}$ in $\tilde{U}_H$ only $O(1)$ times, we get the evaluation \eqref{eq:GEEUHQuerySpOra} on the number of queries to these by simply replacing $\alpha$ in Eq. \eqref{eq:GEEUHQuery} with $\tilde{\alpha}$.
    In $\tilde{U}_H$, we use $O\left(n+\log^{5/2}\left(\frac{s^2\|H\|_\mathrm{max}}{\tilde{\epsilon}}\right)\right)$ additional 1- and 2-qubit gates, and multiplying this evaluation by the number of queries to $\tilde{U}_H$ and adding Eq. \eqref{eq:GEEGateNum} yields the total gate number evaluation in Eq. \eqref{eq:GEEGateNumSpOra}.
    $O\left(\log^{5/2}\left(\frac{s^2\|H\|_\mathrm{max}}{\tilde{\epsilon}}\right)\right)$ ancilla qubits are used in $\tilde{U}_H$, and summing up this and Eq. \eqref{eq:GEEQubitNum} with $a=\tilde{a}$ yields the total qubit number evaluation in Eq. \eqref{eq:GEEQubitNumSpOra}. 
    Since $U_{\ket{\phi_0}}$ is outside $\widetilde{\mathrm{PROJ}}\left(\mu,\frac{\epsilon}{2\tilde{\alpha}},\epsilon^\prime\right)$, we get the evaluation \eqref{eq:GEEUiniQuerySpOra} on the number of queries to this by simply replacing $\alpha$ in Eq. \eqref{eq:GEEUiniQuery} with $\tilde{\alpha}$.

\end{proof}

\subsection{Proof of Theorem \ref{th:main} \label{sec:ProofMain}}

Before presenting the rest of the proof, we give some lemmas.

\begin{lemma}
    Let $\mathbf{u}=(u_0,
    \cdots,u_{N_\mathrm{gr}-1}),\mathbf{v}=(v_0,
    \cdots,v_{N_\mathrm{gr}-1})\in\mathbb{R}^{N_\mathrm{gr}}$.
    Suppose that $\|\mathbf{v}_{n_\mathrm{gr}}\|_{n_\mathrm{gr}}=1$ and that $\|\mathbf{u}-\mathbf{v}\|_\mathrm{max}\le\epsilon$ with some $\epsilon\in\mathbb{R}$.
    Then,
    \begin{equation}
        \braket{\mathbf{u}|\mathbf{v}} \ge 1-2(U-L)^{d/2}\epsilon.
    \end{equation}
    \label{lem:uvInProBnd}
\end{lemma}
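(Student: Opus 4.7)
}
The plan is to unwind the various normalizations and reduce everything to an estimate in the standard Euclidean norm $\|\cdot\|$. I read $\braket{\mathbf{u}|\mathbf{v}}$ as the inner product of the amplitude-encoded quantum states $\ket{\mathbf{u}}$ and $\ket{\mathbf{v}}$, so that $\braket{\mathbf{u}|\mathbf{v}} = (\mathbf{u}\cdot\mathbf{v})/(\|\mathbf{u}\|\,\|\mathbf{v}\|)$. The assumption $\|\mathbf{v}\|_{n_\mathrm{gr}}=1$ is, by definition, $h^d\|\mathbf{v}\|^2 = 1$, which pins down $\|\mathbf{v}\|=h^{-d/2}$.

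First, I would convert the max-norm bound into an $\ell^2$ bound by summing $|u_J-v_J|^2\le\epsilon^2$ over the $N_\mathrm{gr}=n_\mathrm{gr}^d$ entries, obtaining $\|\mathbf{u}-\mathbf{v}\|\le n_\mathrm{gr}^{d/2}\epsilon$. Second, I would estimate the numerator and denominator of the normalized inner product separately: writing $\mathbf{u}\cdot\mathbf{v}=\|\mathbf{v}\|^2+(\mathbf{u}-\mathbf{v})\cdot\mathbf{v}$ and applying Cauchy--Schwarz yields the lower bound $\mathbf{u}\cdot\mathbf{v}\ge h^{-d}-n_\mathrm{gr}^{d/2}\epsilon\cdot h^{-d/2}$, while the triangle inequality gives $\|\mathbf{u}\|\le h^{-d/2}+n_\mathrm{gr}^{d/2}\epsilon$.

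The crucial observation, which makes the bound domain-dependent but grid-independent, is that $n_\mathrm{gr}\,h = n_\mathrm{gr}(U-L)/(n_\mathrm{gr}+1)\le U-L$. Hence $(n_\mathrm{gr}h)^{d/2}\epsilon\le(U-L)^{d/2}\epsilon$, i.e.\ the $h^d$-weighting in $\|\cdot\|_{n_\mathrm{gr}}$ exactly compensates the polynomial blow-up of the $\ell^2$ deviation. Combining the two bounds and dividing gives
\begin{equation}
    \braket{\mathbf{u}|\mathbf{v}} \ge \frac{1-(n_\mathrm{gr}h)^{d/2}\epsilon}{1+(n_\mathrm{gr}h)^{d/2}\epsilon}\ge \frac{1-(U-L)^{d/2}\epsilon}{1+(U-L)^{d/2}\epsilon},
\end{equation}
where the second inequality uses monotonicity of $x\mapsto(1-x)/(1+x)$. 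Applying the elementary estimate $(1-y)/(1+y)\ge 1-2y$ for $y\in[0,1]$ with $y=(U-L)^{d/2}\epsilon$ delivers the claim. The regime $(U-L)^{d/2}\epsilon>1$ is vacuous, since the asserted lower bound is then $<-1$, whereas any normalized inner product satisfies $\braket{\mathbf{u}|\mathbf{v}}\ge -1$. No step is genuinely delicate; the only point requiring attention is tracking how the three different normalizations ($\ell^\infty$, $\ell^2$, and $\|\cdot\|_{n_\mathrm{gr}}$) interact, and recognizing that $n_\mathrm{gr}h\le U-L$ is what keeps the final estimate dimension-controlled.
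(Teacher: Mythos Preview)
Your proof is correct, and it shares the same skeleton as the paper's (max-norm $\to$ $\ell^2$ bound, $\|\mathbf{v}\|=h^{-d/2}$, and the key fact $n_\mathrm{gr}h\le U-L$), but the decomposition of the normalized inner product is genuinely different. You write $\mathbf{u}\cdot\mathbf{v}=\|\mathbf{v}\|^2+(\mathbf{u}-\mathbf{v})\cdot\mathbf{v}$ and pair a \emph{lower} bound on the numerator with an \emph{upper} bound on $\|\mathbf{u}\|$, arriving at the fraction $(1-y)/(1+y)$ and then invoking $(1-y)/(1+y)\ge 1-2y$. The paper instead bounds $\sum u_iv_i\ge \|\mathbf{u}\|^2-\epsilon\sum|u_i|$ entrywise, uses Cauchy--Schwarz in the form $\sum|u_i|\le\sqrt{N_\mathrm{gr}}\|\mathbf{u}\|$, and pairs this with a \emph{lower} bound $\|\mathbf{u}\|\ge\|\mathbf{v}\|-\sqrt{N_\mathrm{gr}}\epsilon$; this yields the linear expression $1-2(U-L)^{d/2}\epsilon$ directly, with no fraction manipulation or case split. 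Your route is arguably the more natural first instinct (Cauchy--Schwarz applied straight to $(\mathbf{u}-\mathbf{v})\cdot\mathbf{v}$), while the paper's route is slightly slicker at the end because the $\|\mathbf{u}\|$ in the numerator cancels the one in the denominator, avoiding the $(1-y)/(1+y)$ step and the need to handle the regime $y>1$ separately.
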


\begin{proof}
    $\|\mathbf{u}-\mathbf{v}\|_\mathrm{max}\le\epsilon$ implies that
    \begin{equation}
        u_i - \epsilon \le v_i \le u_i+\epsilon
        \label{eq:viBnd}
    \end{equation}
    for each $i\in[N_\mathrm{gr}]_0$, and that
    \begin{equation}
        \|\mathbf{u}\|\ge\|\mathbf{v}\| - \|\mathbf{u}-\mathbf{v}\|\ge \|\mathbf{v}\| - \sqrt{N_\mathrm{gr}}\epsilon.
        \label{eq:uvTriIneq}
    \end{equation}
    Besides, the Cauchy-Schwarz inequality implies that
    \begin{equation}
        \sum_{i=0}^{N_\mathrm{gr}-1}|u_i| \le \sqrt{N_\mathrm{gr}} \|\mathbf{u}\|.
        \label{eq:1norm2norm}
    \end{equation}
    Furthermore, by the definition of $\|\cdot\|_{n_\mathrm{gr}}$,
    \begin{equation}
        h^{d/2}\|\mathbf{v}\|=\|\mathbf{v}\|_{n_\mathrm{gr}}=1
        \label{eq:2normngrnorm}
    \end{equation}
    holds.
    Combining these, we have
    \begin{align}
        \braket{\mathbf{u}|\mathbf{v}} & = \frac{\sum_{i=0}^{N_\mathrm{gr}-1}u_iv_i}{\|\mathbf{u}\|\|\mathbf{v}\|} \nonumber \\
        & \ge \frac{\sum_{i=0}^{N_\mathrm{gr}-1}u_i(u_i-\mathrm{sgn}(u_i)\times\epsilon)}{\|\mathbf{u}\|\|\mathbf{v}\|} \nonumber \\
        & = \frac{\|\mathbf{u}\|}{\|\mathbf{v}\|} -\epsilon \frac{\sum_{i=0}^{N_\mathrm{gr}-1}|u_i|}{\|\mathbf{u}\|\|\mathbf{v}\|} \nonumber \\
        & \ge 1 - \frac{2\sqrt{N_\mathrm{gr}}\epsilon}{\|\mathbf{v}\|} \nonumber \\
        & \ge 1-2(U-L)^{d/2}\epsilon,
    \end{align}
    where we use Eq. \eqref{eq:viBnd} at the first inequality, Eqs. \eqref{eq:uvTriIneq} and \eqref{eq:1norm2norm} at the second inequality, and Eq. \eqref{eq:2normngrnorm} at the last equality.
\end{proof}

\begin{lemma}
    Let $\ket{\phi},\ket{\psi},\ket{\zeta}$ be quantum states on an $n_\mathrm{gr}$-qubit register.
    Suppose that $|\braket{\phi | \psi}|\ge\gamma$ for $\gamma\in(0,1)$.
    Then, $|\braket{\phi | \zeta}|\ge\frac{\gamma}{2}$ holds if
    \begin{equation}
        |\braket{\psi | \zeta}|\ge \eta(\gamma).
        \label{eq:PsiZeta}
    \end{equation}
    \label{lem:TriRelQState}
\end{lemma}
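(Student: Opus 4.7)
The plan is to view the overlap bounds as statements about the Fubini--Study angle $d(\ket{u},\ket{v}) \coloneqq \arccos|\braket{u|v}|$, which is well known to be a metric on the space of normalized states (one can reduce to the real case by absorbing appropriate global phases into $\ket{\psi}$ and $\ket{\zeta}$ so that all relevant inner products are non-negative real; then the triangle inequality for $\arccos$ on unit vectors in $\mathbb{R}^N$ applies). This lets me convert the hypotheses into
\begin{equation}
    d(\ket{\phi},\ket{\zeta}) \le d(\ket{\phi},\ket{\psi}) + d(\ket{\psi},\ket{\zeta}) \le \arccos(\gamma) + \arccos(\eta(\gamma)),
\end{equation}
using that $\arccos$ is monotonically decreasing on $[0,1]$.

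Next, I will show the identity
\begin{equation}
    \arccos(\gamma) + \arccos(\eta(\gamma)) = \arccos(\gamma/2),
\end{equation}
which amounts to verifying $\eta(\gamma) = \cos\bigl(\arccos(\gamma/2)-\arccos(\gamma)\bigr)$. By the cosine subtraction formula, the right-hand side equals
\begin{equation}
    \tfrac{\gamma}{2}\cdot\gamma + \sqrt{1-\gamma^2/4}\,\sqrt{1-\gamma^2} = \tfrac{1}{2}\gamma^2 + \tfrac{1}{2}\sqrt{4-5\gamma^2+\gamma^4},
\end{equation}
which matches the definition of $\eta(\gamma)$. Since $\gamma\in(0,1)$, both $\arccos(\gamma/2)$ and $\arccos(\gamma)$ lie in $(0,\pi/2)$ with $\arccos(\gamma/2)\ge\arccos(\gamma)$, so the above identity is consistent within the principal branch of $\arccos$.

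Combining the two steps, $\arccos|\braket{\phi|\zeta}| \le \arccos(\gamma/2)$, and applying $\cos$ (which reverses the inequality since the arguments lie in $[0,\pi/2]$) yields $|\braket{\phi|\zeta}|\ge\gamma/2$, as required. The only subtle point is the justification of the triangle inequality for the Fubini--Study angle when inner products are complex; this is the step I would be most careful about, but it follows by a standard phase-absorption argument that reduces everything to the real-vector case, where the angle-triangle inequality is elementary. The rest is a direct trigonometric computation showing that $\eta(\gamma)$ is precisely tuned to make the bound collapse to $\gamma/2$.
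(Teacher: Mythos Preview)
Your proof is correct. The approach differs in presentation from the paper's but is equivalent in content. The paper argues directly: write $\ket{\phi}=\alpha\ket{\psi}+\beta\ket{\psi_\perp}$, use the reverse triangle inequality to obtain $|\braket{\phi|\zeta}|\ge |\alpha|\,|\braket{\psi|\zeta}|-|\beta|\,|\braket{\psi_\perp|\zeta}|\ge \gamma\eta(\gamma)-\sqrt{1-\gamma^2}\sqrt{1-\eta^2(\gamma)}$, and then asserts ``by some algebra'' that this equals $\gamma/2$. Your route recognizes that this direct computation is precisely the Fubini--Study triangle inequality, and your trigonometric identity $\eta(\gamma)=\cos\bigl(\arccos(\gamma/2)-\arccos(\gamma)\bigr)$ is exactly the ``some algebra'' the paper omits; indeed, with $\theta_1=\arccos\gamma$ and $\theta_3=\arccos\eta(\gamma)$ the paper's lower bound reads $\cos\theta_1\cos\theta_3-\sin\theta_1\sin\theta_3=\cos(\theta_1+\theta_3)=\gamma/2$. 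So the two arguments are the same computation viewed from different angles: yours is more conceptual and makes the final verification transparent, while the paper's is self-contained and does not need to cite the metric property.

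One small caveat on your justification of the Fubini--Study triangle inequality: phase absorption alone cannot make \emph{all three} pairwise overlaps real simultaneously, so you are not literally ``reducing to $\mathbb{R}^N$''. The clean way to finish that step is to absorb phases so that $\braket{\phi|\psi}$ and $\braket{\psi|\zeta}$ are non-negative real, embed $\mathbb{C}^N\hookrightarrow\mathbb{R}^{2N}$, and note that the real angle between $\phi$ and $\zeta$ in $\mathbb{R}^{2N}$ is $\arccos\bigl(\mathrm{Re}\,\braket{\phi|\zeta}\bigr)\ge \arccos|\braket{\phi|\zeta}|$; the ordinary spherical triangle inequality in $\mathbb{R}^{2N}$ then gives what you need. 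This is a standard argument and does not affect the validity of your proof.
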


\begin{proof}
    We can write
    \begin{equation}
        \ket{\phi}=\alpha \ket{\psi} + \beta \ket{\psi_\perp}
    \end{equation}
    with $\alpha,\beta\in\mathbb{C}$ such that $|\alpha|\ge\gamma$ and $|\beta|\le\sqrt{1-\gamma^2}$, and a quantum state $\ket{\psi_\perp}$ orthogonal to $\ket{\psi}$.
    Besides, Eq. \eqref{eq:PsiZeta} implies that
    \begin{equation}
        |\braket{\zeta | \psi_\perp}| \le \sqrt{1-\eta^2(\gamma)}.
        \label{eq:zetapsiperp}
    \end{equation}
    Then, we have
    \begin{align}
        |\braket{\phi | \zeta}| & \ge |\alpha|\times|\braket{\zeta | \psi}|-|\beta|\times|\braket{\zeta | \psi_\perp}| \nonumber \\
        & \ge \gamma \eta(\gamma)-\sqrt{1-\gamma^2}\sqrt{1-\eta^2(\gamma)}.
    \end{align}
    By some algebra, we see that the last line is greater than or equal to $\frac{\gamma}{2}$. 
\end{proof}

Then, the rest of the proof of Theorem \ref{th:main} is as follows.

\begin{proof}[Rest of the proof of Theorem \ref{th:main}]

\ \\

\noindent \underline{\textbf{How to construct $O^{L_{n_\mathrm{gr}}}_\mathrm{row}$, $O^{L_{n_\mathrm{gr}}}_\mathrm{col}$ and $O^{L_{n_\mathrm{gr}}}_\mathrm{ent}$}}\\

    Let us start from $O^{L_{n_\mathrm{gr}}}_\mathrm{row}$.
    For $L_{n_\mathrm{gr}}$, whose sparsity is $2d+1$, $r_{ik}$ is given as\footnote{Strictly speaking, this expression for $r_{ik}$ holds for only $i$ such that none of the entries of $\sigma^{-1}(i)$ is $0$ or $n_\mathrm{gr}-1$, and otherwise the expression is slightly modified. However, such a handling is straightforward and thus we do not show the complete expression here for conciseness.}
    \begin{equation}
        r_{ik}=J(J^{-1}(i)+\mathbf{d}_{i,k}).
    \end{equation}
    Here, $\mathbf{d}_{i,k}\in\mathbb{R}^{d}$ is defined as
    \begin{equation}
        \mathbf{d}_{i,k}=
        \begin{cases}
            -\mathbf{e}_k & \mathrm{if} \ k=1,\cdots,d \\
            \mathbf{0}  & \mathrm{if} \ k=d+1 \\
            \mathbf{e}_{k-d-1} & \mathrm{if} \ k=d+2,\cdots,2d+1
        \end{cases}
        ,
    \end{equation}
    where $\mathbf{0}$ is the $d$-dimensional zero vector.
    $J:[n_\mathrm{gr}]_0^d\rightarrow[N_\mathrm{gr}]_0$ is the map in Eq. \eqref{eq:IdConv}, which is implemented by some additions and multiplications.
    $J^{-1}$ is its inverse, which is implemented by a sequence of divisions shown in Algorithm \ref{alg:Jinv}.
    Therefore, we see that we can implement $O^{L_{n_\mathrm{gr}}}_\mathrm{row}$ using $O(d)$ arithmetic circuits.
    $O^{L_{n_\mathrm{gr}}}_\mathrm{col}$ is implemented similarly.

    \begin{algorithm}[H]
    \begin{algorithmic}[1]
    \REQUIRE $K\in[N_\mathrm{gr}]_0$
    \ENSURE $k_1,\cdots,k_d$ such that $K=\sum_{i=1}^d n_\mathrm{gr}^{i-1} k_i$

    \STATE Set $K_{d}=K$.

    \FOR{$i=d,\cdots,2$}
    \STATE Divide $K_i$ by $n_\mathrm{gr}^{i-1}$, and let the quotient and remainder be $k_i$ and $K_{i-1}$, respectively. 

    \ENDFOR

    \STATE Divide $K_1$ by $n_\mathrm{gr}$, and let the quotient and remainder be $k_2$ and $k_1$, respectively. 
    
    \caption{$J^{-1}$}
    \label{alg:Jinv}
    \end{algorithmic}
    \end{algorithm}

    Next, let us consider the implementation of $O^{L_{n_\mathrm{gr}}}_\mathrm{ent}$.
    For $K,K^\prime\in[N_\mathrm{gr}]_0$, we can perform the following operation:
    \begin{widetext}
    \begin{align}
        &\ket{K}\ket{K^\prime}\ket{0}^{\otimes(4d+5)} \nonumber \\
        \rightarrow & \ket{K}\ket{K^\prime}\ket{\mathbf{k}}_\mathrm{bin}\ket{\mathbf{k}^\prime}_\mathrm{bin}\ket{0}^{\otimes(4d+3)} \nonumber \\
        \rightarrow & \ket{K}\ket{K^\prime}\ket{\mathbf{k}}_\mathrm{bin}\ket{\mathbf{k}^\prime}_\mathrm{bin}\Ket{a_0\left(\mathbf{x}^\mathrm{gr}_{\mathbf{k}}\right)}\Ket{a_1\left(\mathbf{x}^\mathrm{gr}_{\mathbf{k}}+\frac{h}{2}\mathbf{e}_1\right)}\Ket{a_1\left(\mathbf{x}^\mathrm{gr}_{\mathbf{k}}-\frac{h}{2}\mathbf{e}_1\right)} \cdots \Ket{a_d\left(\mathbf{x}^\mathrm{gr}_{\mathbf{k}}+\frac{h}{2}\mathbf{e}_d\right)}\Ket{a_d\left(\mathbf{x}^\mathrm{gr}_{\mathbf{k}}-\frac{h}{2}\mathbf{e}_d\right)} \nonumber \\
        & \ \otimes \ket{0}^{\otimes(2d+2)} \nonumber \\
        \rightarrow & \ket{K}\ket{K^\prime}\ket{\mathbf{k}}_\mathrm{bin}\ket{\mathbf{k}^\prime}_\mathrm{bin} \Ket{a_0\left(\mathbf{x}^\mathrm{gr}_{\mathbf{k}}\right)}\Ket{a_1\left(\mathbf{x}^\mathrm{gr}_{\mathbf{k}}+\frac{h}{2}\mathbf{e}_1\right)}\Ket{a_1\left(\mathbf{x}^\mathrm{gr}_{\mathbf{k}}-\frac{h}{2}\mathbf{e}_1\right)}\cdots\Ket{a_d\left(\mathbf{x}^\mathrm{gr}_{\mathbf{k}}+\frac{h}{2}\mathbf{e}_d\right)}\Ket{a_d\left(\mathbf{x}^\mathrm{gr}_{\mathbf{k}}-\frac{h}{2}\mathbf{e}_d\right)} \nonumber \\
        & \ \otimes \ket{1_{\mathbf{k}^\prime-\mathbf{k}=\mathbf{0}}}  \ket{1_{\mathbf{k}^\prime-\mathbf{k}=\mathbf{e}_1}}\ket{1_{\mathbf{k}^\prime-\mathbf{k}=-\mathbf{e}_1}}\cdots \ket{1_{\mathbf{k}^\prime-\mathbf{k}=\mathbf{e}_d}}\ket{1_{\mathbf{k}^\prime-\mathbf{k}=-\mathbf{e}_d}}\ket{0} \nonumber \\
        = & \ket{K}\ket{K^\prime}\ket{\mathbf{k}}_\mathrm{bin}\ket{\mathbf{k}^\prime}_\mathrm{bin}\Ket{a_0\left(\mathbf{x}^\mathrm{gr}_{\mathbf{k}}\right)}\Ket{a_1\left(\mathbf{x}^\mathrm{gr}_{\mathbf{k}}+\frac{h}{2}\mathbf{e}_1\right)}\Ket{a_1\left(\mathbf{x}^\mathrm{gr}_{\mathbf{k}}-\frac{h}{2}\mathbf{e}_1\right)} \cdots\Ket{a_d\left(\mathbf{x}^\mathrm{gr}_{\mathbf{k}}+\frac{h}{2}\mathbf{e}_d\right)}\Ket{a_d\left(\mathbf{x}^\mathrm{gr}_{\mathbf{k}}-\frac{h}{2}\mathbf{e}_d\right)} \nonumber \\
        & \ \otimes \left(1_{\mathbf{k}^\prime-\mathbf{k}\ne\mathbf{0},\pm\mathbf{e}_1,\cdots,\pm\mathbf{e}_d}\ket{0}^{\otimes (2d+1)}\ket{0}+1_{\mathbf{k}^\prime-\mathbf{k}=\mathbf{0}}\ket{1}\ket{0}^{\otimes 2d}\ket{0}\right.  \nonumber \\
        & \quad +1_{\mathbf{k}^\prime-\mathbf{k}=\mathbf{e}_1}\ket{0}\ket{1}\ket{0}\ket{0}^{\otimes (2d-2)}\ket{0} +1_{\mathbf{k}^\prime-\mathbf{k}=-\mathbf{e}_1}\ket{0}\ket{0}\ket{1}\ket{0}^{\otimes (2d-2)}\ket{0}+\cdots   \nonumber \\
        & \quad \left. +1_{\mathbf{k}^\prime-\mathbf{k}=\mathbf{e}_d}\ket{0}\ket{0}^{\otimes (2d-2)}\ket{1}\ket{0}\ket{0}+1_{\mathbf{k}^\prime-\mathbf{k}=-\mathbf{e}_d}\ket{0}\ket{0}^{\otimes (2d-2)}\ket{0}\ket{1}\ket{0}\right) \nonumber \\
        \rightarrow & \ket{K}\ket{K^\prime}\ket{\mathbf{k}}_\mathrm{bin}\ket{\mathbf{k}^\prime}_\mathrm{bin}\Ket{a_0\left(\mathbf{x}^\mathrm{gr}_{\mathbf{k}}\right)}\Ket{a_1\left(\mathbf{x}^\mathrm{gr}_{\mathbf{k}}+\frac{h}{2}\mathbf{e}_1\right)}\Ket{a_1\left(\mathbf{x}^\mathrm{gr}_{\mathbf{k}}-\frac{h}{2}\mathbf{e}_1\right)}\cdots\Ket{a_d\left(\mathbf{x}^\mathrm{gr}_{\mathbf{k}}+\frac{h}{2}\mathbf{e}_d\right)}\Ket{a_d\left(\mathbf{x}^\mathrm{gr}_{\mathbf{k}}-\frac{h}{2}\mathbf{e}_d\right)} \nonumber \\
        & \ \otimes \left(1_{\mathbf{k}^\prime-\mathbf{k}\ne\mathbf{0},\pm\mathbf{e}_1,\cdots,\pm\mathbf{e}_d}\ket{0}^{\otimes (2d+1)}\ket{0} +1_{\mathbf{k}^\prime-\mathbf{k}=\mathbf{0}}\ket{1}\ket{0}^{\otimes 2d}\Ket{(L_{n_\mathrm{gr}})_{K,K^\prime}}  \right.\nonumber \\
        & \quad +1_{\mathbf{k}^\prime-\mathbf{k}=\mathbf{e}_1}\ket{0}\ket{1}\ket{0}\ket{0}^{\otimes (2d-2)}\Ket{(L_{n_\mathrm{gr}})_{K,K^\prime}}+1_{\mathbf{k}^\prime-\mathbf{k}=-\mathbf{e}_1}\ket{0}\ket{0}\ket{1}\ket{0}^{\otimes (2d-2)}\Ket{(L_{n_\mathrm{gr}})_{K,K^\prime}}+\cdots \nonumber \\
        & \quad \left.+1_{\mathbf{k}^\prime-\mathbf{k}=\mathbf{e}_d}\ket{0}\ket{0}^{\otimes (2d-2)}\ket{1}\ket{0}\Ket{(L_{n_\mathrm{gr}})_{K,K^\prime}} +1_{\mathbf{k}^\prime-\mathbf{k}=-\mathbf{e}_d}\ket{0}\ket{0}^{\otimes (2d-2)}\ket{0}\ket{1}\Ket{(L_{n_\mathrm{gr}})_{K,K^\prime}} \right) \nonumber \\
        = & \ket{K}\ket{K^\prime}\ket{\mathbf{k}}_\mathrm{bin}\ket{\mathbf{k}^\prime}_\mathrm{bin} \Ket{a_0\left(\mathbf{x}^\mathrm{gr}_{\mathbf{k}}\right)}\Ket{a_1\left(\mathbf{x}^\mathrm{gr}_{\mathbf{k}}+\frac{h}{2}\mathbf{e}_1\right)}\Ket{a_1\left(\mathbf{x}^\mathrm{gr}_{\mathbf{k}}-\frac{h}{2}\mathbf{e}_1\right)} \cdots\Ket{a_d\left(\mathbf{x}^\mathrm{gr}_{\mathbf{k}}+\frac{h}{2}\mathbf{e}_d\right)}\Ket{a_d\left(\mathbf{x}^\mathrm{gr}_{\mathbf{k}}-\frac{h}{2}\mathbf{e}_d\right)} \nonumber \\
        & \ \otimes \ket{1_{\mathbf{k}^\prime-\mathbf{k}=\mathbf{0}}}\ket{1_{\mathbf{k}^\prime-\mathbf{k}=\mathbf{e}_1}}\ket{1_{\mathbf{k}^\prime-\mathbf{k}=-\mathbf{e}_1}}\cdots\ket{1_{\mathbf{k}^\prime-\mathbf{k}=\mathbf{e}_d}}\ket{1_{\mathbf{k}^\prime-\mathbf{k}=-\mathbf{e}_d}}\ket{(L_{n_\mathrm{gr}})_{K,K^\prime}} \nonumber \\
        \rightarrow & \ket{K}\ket{K^\prime}\ket{0}^{\otimes(4d+5)}\ket{(L_{n_\mathrm{gr}})_{K,K^\prime}}.
        \label{eq:OLent}
    \end{align}
    \end{widetext}
    Here, the transformation at the first arrow is done by the circuit for $J^{-1}$, where $\mathbf{k}=J^{-1}(K)$ and $\mathbf{k}^\prime=J^{-1}(K^\prime)$.
    At the second arrow, we make $O(d)$ uses of $O_{a_0},\cdots,O_{a_d}$.
    At the third arrow, we use $O(d)$ arithmetic circuits, and at the fourth arrow, we use controlled versions of arithmetic circuits to compute the entries of $L_{n_\mathrm{gr}}$ with the values of $a_0,
    \cdots,a_d$ computed in the previous step according to Eq. \eqref{eq:FDMat}.
    The last arrow in Eq. \eqref{eq:OLent} is uncomputation of the first and second ones.
    Note that the circuit for the operation in Eq. \eqref{eq:OLent} is nothing but $O^{L_{n_\mathrm{gr}}}_\mathrm{ent}$.

    Lastly, let us count the number of the queries to $O_{a_0},\cdots,O_{a_d}$ and arithmetic circuits in $O^{L_{n_\mathrm{gr}}}_\mathrm{row},O^{L_{n_\mathrm{gr}}}_\mathrm{col}$ and $O^{L_{n_\mathrm{gr}}}_\mathrm{ent}$.
    In $O^{L_{n_\mathrm{gr}}}_\mathrm{row}$ and $O^{L_{n_\mathrm{gr}}}_\mathrm{col}$, we use $O(d)$ arithmetic circuits.
    In $O^{L_{n_\mathrm{gr}}}_\mathrm{ent}$, we make $O(d)$ uses of $O_{a_0},\cdots,O_{a_d}$ and (controlled) arithmetic circuits.\\

\noindent \underline{\textbf{Accuracy}}\\

Because of Theorem \ref{th:EigConv}, for $n_\mathrm{gr}$ in \eqref{eq:ngrForEps}, $\lambda^1_{n_\mathrm{gr}}$ is $\frac{\epsilon}{2}$-close to $\lambda_1$.
Besides, the output of $\proc{EstEig}\left(L_{n_\mathrm{gr}},\frac{\epsilon}{2},\delta\right)$ is a $\frac{\epsilon}{2}$-approximation of $\lambda^1_{n_\mathrm{gr}}$.
Therefore, the output of Algorithm \ref{alg:EigEstim} is an $\epsilon$-approximation of $\lambda_1$.\\

\noindent \underline{\textbf{Query complexity}}\\

The sparsity of $L_{n_\mathrm{gr}}$ is
\begin{equation}
    s=O(d).
    \label{eq:spL}
\end{equation}
Its max norm is bounded as
\begin{align}
    &\|L_{n_\mathrm{gr}}\|_\mathrm{max} \nonumber \\
    \le & \frac{2d a_\mathrm{max}}{h^2} + a_{0,\mathrm{max}} \nonumber \\
    =&O\left(\frac{d a_\mathrm{max}}{(U-L)^2} \times \max\left\{\frac{C^1_\mathcal{L}}{\epsilon},\frac{D^1_\mathcal{L}(U-L)^{\frac{d}{2}}}{1-\eta(\gamma)}\right\}+ a_{0,\mathrm{max}}\right)
    \label{eq:normL}
\end{align}
for $n_\mathrm{gr}$ in Eq. \eqref{eq:ngrForEps}.

Let us evaluate the overlap between $\ket{\mathbf{v}_{\tilde{f}_1,n_\mathrm{gr}}}$ and $\ket{\mathbf{v}^1_{n_\mathrm{gr}}}$.
Because of Theorem \ref{th:EigConv},
\begin{equation}
\|\mathbf{v}^1_{n_\mathrm{gr}}-\mathbf{v}_{f_1,n_\mathrm{gr}}\|_\mathrm{max}\le \frac{1-\eta(\gamma)}{2(U-L)^{d/2}}
\end{equation}
holds for $n_\mathrm{gr}$ in Eq. \eqref{eq:ngrForEps}.
Because of Lemma \ref{lem:uvInProBnd}, this leads to
\begin{equation}
    \left|\braket{\mathbf{v}^1_{n_\mathrm{gr}}|\mathbf{v}_{f_1,n_\mathrm{gr}}}\right|\ge\eta(\gamma).
\end{equation}
Then, because of Lemma \ref{lem:TriRelQState}, combining this and Eq. \eqref{eq:f1f1til} leads to 
\begin{equation}
    \left|\braket{\mathbf{v}^1_{n_\mathrm{gr}}|\mathbf{v}_{\tilde{f}_1,n_\mathrm{gr}}}\right|\ge\frac{\gamma}{2}.
\end{equation}

Let us incorporate the above observations into Corollary \ref{co:GEESparse}.
By using Eqs. \eqref{eq:spL} and \eqref{eq:normL} in Eqs. \eqref{eq:GEEUHQuerySpOra} and \eqref{eq:GEEUiniQuerySpOra} and replacing $\gamma$ with $\frac{\gamma}{2}$, we obtain estimations of the numbers of queries to $O^{L_{n_\mathrm{gr}}}_\mathrm{row},O^{L_{n_\mathrm{gr}}}_\mathrm{col},O^{L_{n_\mathrm{gr}}}_\mathrm{ent}$ and $O_{\tilde{f}_1,n_\mathrm{gr}}$ in $\proc{EstEig}\left(L_{n_\mathrm{gr}},\frac{\epsilon}{2},\delta\right)$ in Algorithm \ref{alg:EigEstim}, and, noting that
$O^{L_{n_\mathrm{gr}}}_\mathrm{row}$, $O^{L_{n_\mathrm{gr}}}_\mathrm{col}$ and $O^{L_{n_\mathrm{gr}}}_\mathrm{ent}$ consist of $O(d)$ uses of $O_{a_0},\cdots,O_{a_d}$ and arithmetic circuits, we get the query complexity estimations in Eqs. \eqref{eq:EigEstCompOai} and \eqref{eq:EigEstCompOf1}.
    
\end{proof}

\bibliography{reference}

\end{document}